\newcolumntype{Y}{>{\raggedright\arraybackslash}X}
  \apptocmd{\EndUpon}{\algpx@endIndent}{}{}%
\pretocmd{\Upon}{\algpx@endCodeCommand}{}{}
  \pretocmd{\EndUpon}{\algpx@endCodeCommand[1]}{}{}%
  \pretocmd{\EndUpon}{\algpx@endCodeCommand[0]}{}{}%
\newcommand{\invec}{\textbf{X}}
\def\Speer{\mu}
\def\ConstDecTree{\mbox{\sc Const\_Decision\_Tree}}
\def\Determine{\mbox{\sc Determine}}
\def\inline#1:{\par\vskip 7pt\noindent{\bf #1:}\hskip 10pt}
\def\dnsinline#1:{\par\vskip -7pt\noindent{\bf #1:}\hskip 10pt}
\def\inline#1:{\par\vskip 7pt\noindent{\bf #1:}\hskip 10pt}
\def\dnsinline#1:{\par\vskip -7pt\noindent{\bf #1:}\hskip 10pt}
\def\Source{source\xspace}
\def\Advers{\mathcal{A}}
\def\Inum{\mathcal{K}}
\def\indID{ind}
\def\indID{\ell}
\def\Exp{\hbox{\rm I\kern-2pt E}}
\long\def\commentstart #1\commentend{}
\def\byzfrac{\beta}
\def\goodfrac{\gamma}
\def\peers{peers\xspace}
\def\peer{peer\xspace}
\def\nonfaulty{nonfaulty\xspace}
\def\BYZ{\mathcal{F}} 
\def\res{res}
\def\DataQuery{\mbox{\tt Query}}
\def\Time{\mathcal{T}}
\def\Query{\mathcal{Q}}
\def\download{\mbox{\rm Download}}
\def\retrieve{{\textsf{Retrieve}}}
\def\phase{{\mathtt{phase}}}
\def\stage{{\mathtt{stage}}}
\def\metoo{``me neither''}
\def\EXEC{\mathtt{EX}}
\def\cH{\mathcal{H}}
\def\FS{\mbox{\sf FS}}
\def\Time{\mathcal{T}}
\def\Query{\mathcal{Q}}
\def\Message{\mathcal{M}}
\def\Protocol{\mathcal{P}}
\def\leftchild{\mbox{\sf\small left-child}}
\def\rightchild{\mbox{\sf\small right-child}}
\title{Distributed Download from an External Data Source in Asynchronous Faulty Settings}
\author{John Augustine}{Indian Institute of Technology Madras, India \and \url{https://cse.iitm.ac.in/~augustine/}}{augustine@iitm.ac.in}{https://orcid.org/0000-0003-0948-3961}{Supported by the Cybersecurity Centre, IIT Madras.}
\author{Soumyottam Chatterjee}{CISPA -- The Helmholtz Center for Information Security}{soumyottam.chatterjee@cispa.de}{https://orcid.org/0000-0002-0479-0690}{(Optional) author-specific funding acknowledgements}
\author{Valerie King}{University of Victoria, Canada \and \url{https://webhome.cs.uvic.ca/~val/}}{val@uvic.ca}{https://orcid.org/0000-0001-7311-7427}{}
\author{Manish Kumar}{Indian Institute of Technology Madras, India \and 
\url{https://sites.google.com/view/saumitr/home}}{manishsky27@gmail.com}{https://orcid.org/0000-0002-0414-7910}{}
\author{Shachar Meir}{Weizmann Institute of Science, Rehovot, Israel \and \url{https://shacharmeir007.github.io}
}{shachar.meir@weizmann.ac.il}{https://orcid.org/0009-0003-5007-047X}{}
\author{David Peleg}{Weizmann Institute of Science, Rehovot, Israel \and \url{https://www.weizmann.ac.il/math/peleg/} }{david.peleg@weizmann.ac.il}{https://orcid.org/0000-0003-1590-0506}{Venky Harinarayanan and Anand Rajaraman Visiting Chair Professor. The funds from this professorship enabled exchange visits between IIT Madras, India, and the Weizmann Institute of Science, Israel.}
\authorrunning{J. Augustine, S. Chatterjee, V. King, M. Kumar, S. Meir, D. Peleg, }
\keywords{Byzantine Fault Tolerance, Blockchain Oracle, Data
Retrieval Model, Distributed Download} 
\begin{document}

\maketitle

\begin{abstract}
The distributed \emph{Data Retrieval (DR)} model consists of $k$ peers connected by a complete peer-to-peer communication network, and a trusted \emph{external data source} that stores an array $\invec$ of $n$ bits ($n \gg k$). Up to $\beta k$ of the peers might fail in any execution (for $\beta \in [0, 1)$). Peers can obtain the information either by inexpensive messages passed among themselves or through expensive queries to the source array $\invec$. In the DR model,  we focus on designing protocols that minimize the number of queries performed by any nonfaulty peer (a measure referred to as the \emph{query complexity}) while maximizing the resiliency parameter $\beta$.

The $\download$ problem requires each nonfaulty peer to correctly learn the entire array $\invec$. Earlier work on this problem focused on \emph{synchronous} communication networks and established several deterministic and randomized upper and lower bounds. Our work is the first to extend the study of distributed data retrieval to \emph{asynchronous} communication networks. We address the \download\ problem under both the Byzantine and crash failure models. We present query-optimal deterministic solutions in an asynchronous model that can tolerate any fixed fraction $\beta<1$ of crash faults.  In the \emph{Byzantine} failure model, it is known that deterministic protocols incur a query complexity of $\Omega(n)$ per peer, even under synchrony. We extend this lower bound to randomized protocols in the asynchronous model for $\beta \geq 1/2$, and further show that for $\beta < 1/2$, a randomized protocol exists with near-optimal query complexity. To the best of our knowledge, this is the first work to address the $\download$ problem in asynchronous communication networks.
\end{abstract}

\section{Introduction}

\subsection{Background and motivation}





The Data Retrieval Model (DR) was first introduced in \cite{ABMPRT24} to abstract the fundamental process of a group learning from a reliable external data source, where the data source is too large or it is too expensive to be learned individually (i.e., requires members of the group to collaborate), and some members of the group might crash during execution or act in other ways to deliberately sabotage the learning process. One key example of systems where this process takes place is \emph{blockchain oracles} \cite{ocr, DORA}. 
We address this Oracle data delivery process in detail later and present a method for improving its performance using the DR model and the protocols designed in this work. 

The DR model contains two entities: (i) a peer-to-peer network and (ii) an external data source in the form of an $n$ bit array \invec. There are $k$ peers, up to $\byzfrac$ fraction of which may be faulty (and at least $\goodfrac=1-\byzfrac$ fraction of which are \nonfaulty). Each peer has access to the content of the array through queries. The general class of \emph{retrieval problems} consists of problems $\retrieve(f)$ requiring every peer to output $f(\invec)$ for some computable function $f$ of the input array $\invec$. In this work, we focus on the most fundamental retrieval problem, $\retrieve(f_{id})$ where $f_{id}(\invec) =\invec$, referred to hereafter as the $\download$ problem\footnote{It is fundamental since every retrieval problem $\retrieve(f)$ can be solved by first performing download and then locally computing $f(\invec)$.}, where every peer needs to learn the entire input $\invec$.

In the absence of failures, the problem can be easily solved in a query-balanced manner. Even with failures, the problem can be trivially solved at the cost of a large number of queries, as the non-faulty peers can directly query all the bits. This solution is prohibitively expensive; thus, we focus on minimizing the number of queries made by each non-faulty peer.
For \emph{synchronous} systems with Byzantine faults, a lower bound of $\Omega(\byzfrac n)$  on query complexity for \emph{deterministic} $\download$ is shown in~\hbox{\cite{ABMPRT24}} for every $\byzfrac < 1$, followed by a matching upper bound when $\byzfrac<1/2$. This implies that in the presence of Byzantine faults, one cannot attain the ideal query complexity of $\frac{n}{\goodfrac k}$ without using randomization. 

In this work, we consider \download\ protocols in the \emph{asynchronous} setting for both the crash and Byzantine fault models. In the asynchronous Byzantine fault setting, we prove that, unlike the synchronous setting, where randomization can overcome the deterministic lower bound, $\Omega(n)$ queries per peer are required when $\byzfrac\geq 1/2$, even for randomized protocols.
We complement this lower bound with a protocol for the $\byzfrac<1/2$ regime that achieves a query complexity of $\tilde{O}\left(\frac{n}{ (\goodfrac - \byzfrac)k}\right)$, which, for a constant $\gamma-\beta$ , is within log factors of the generic lower bound of $\Omega(n/\goodfrac k)$. 

Turning our attention to the more benign setting of \emph{crash} faults (i.e., where all peers are honest but some $\beta$ fraction may stop functioning), the picture is brighter. For this model, it turns out that even in the asynchronous setting, one can get
efficient \emph{deterministic} \download\ protocols that achieve the optimal query complexity of $O\left(\frac{n}{\goodfrac k}\right)$, for any fraction $\byzfrac < 1$ of crashes.



 \subsection{The Model} 
\label{sec: model}

In the Data Retrieval (DR) model, the system consists of two components. The first is a collection of $k$ peers, each equipped with a unique ID from the range $[1,k]$, connected by a complete communication network (or clique).
The network provides \emph{peer-to-peer message passing}, namely, every peer can send at time $t$ a (possibly different) message of size at most $\phi$ bits to each other peer.

The second component of the DR model is an \emph{external data source}. The \emph{\Source} stores an $n$-bit input array $\invec=\{b_1,\ldots,b_n\}$.
%
%
It provides the peers with read-only access, allowing each peer to retrieve the data through queries of the form $\DataQuery(i)$, for $1\le i\le n$. 
The answer returned by the \Source\ would then be $b_i$, the $i^{th}$ element in the array.
This type of communication is referred to as \emph{\Source-to-peer} communication.

We consider asynchronous communication, where any communication (both among peer-to-peer and source-to-peer) can be delayed by any finite amount of time.
For randomized protocols, we use the following notion of \emph{cycles}.
\subparagraph*{Cycles.}
In the asynchronous model, there is no global notion of \emph{rounds}, as each peer operates at a different pace. Nevertheless, to describe our protocols and analyze their performance, it is convenient to
divide the \emph{local} execution of each peer $\Speer$ into (varying time) \emph{cycles}. Each such local cycle consists of the following stages.
\begin{itemize}
    \item Sending (0 or more) queries and getting answers.
    \item Sending (0 or more) messages.
    \item Waiting to receive messages.
\end{itemize}
We assume that local computation takes 0 time and can be performed at any point in a cycle. Moreover, when waiting for messages, after every message is received, the peer can adaptively decide whether to keep waiting for an additional message or continue to the next cycle. Note that the local cycle $r$ of peer $\Speer$ might coincide with a different local cycle $r'$ of another peer $\Speer'$.

In the absence of global time units, it is convenient to break the 
time axis into ``virtual blocks'' by defining $t^r$, for integer $r\ge 1$, 
as the first time any peer started its local cycle $r$.

Every message is of size at most $\phi$ bits, where $\phi$ is a system parameter.
Note that throughout the paper, we either set $\phi$ to a specific value, or leave it as a parameter, in which case increasing the message size parameter $\phi$ would result in faster protocols.

\subparagraph*{The adversary.}
Our analysis uses the notion of an \emph{adversary}, representing the adverse conditions in which the system operates, including the asynchronous communication and the possibility of failures.

The adversary has two types of operation. First, it can fail up to $\beta k$ peers, under the restriction that it can only fail a peer between its cycles (or before the first cycle), meaning that a peer can make random decisions in its current cycle without the adversary being able to react until the end of the cycle. Second, it can set the time $t_{\Speer, \Speer'}^r$ it takes a message sent by peer $\Speer$ in its local cycle $r$ to reach peer $\Speer'$, under the restriction that it must set the time $t_{\Speer, \Speer'}^r$ for every pair of peers $\Speer, \Speer'$, before time $t^r$. 
In other words, the adversary must set the latency of each message sent during a cycle $r$ before any peer starts cycle $r$. The adversary can also decide when every peer starts its execution (i.e., we do not assume a simultaneous start). 
Note that in the case of deterministic protocols, the notion of cycles is irrelevant, and we consider a standard adversary that can fail a peer at any point of the execution and can delay messages for any finite amount of time.

The adversary $\Advers$ selects the input data and determines the failure pattern of the peers. In the \emph{crash} failure model, the adversary's power is limited to crashing some of the peers in every execution of the protocol. Once a peer crashes, it stops its local execution of the protocol arbitrarily and permanently. This could happen \emph{in the middle of operation}, e.g., after the peer has already sent some, but perhaps not all, of the messages it was instructed by the protocol to send out at a given point in time. 
In contrast, in the \emph{Byzantine} failure model, a failed peer can deviate from the protocol in arbitrary ways.
We assume that the adversary can fail at most $\byzfrac k$ peers, for some given\footnote{We do not assume~$\byzfrac$ to be a fixed constant (unless mentioned otherwise).} $\byzfrac \in [0,1)$. 
We let $\goodfrac = 1-\byzfrac$, so there is (at least) a $\goodfrac$ fraction of \nonfaulty peers in every execution. Denote the set of faulty (respectively, \nonfaulty) peers in the execution by $\BYZ$. (resp., $\cH$). 
%
%

%
We assume that the adversary knows the protocol and hence can simulate it (up to random coins). 
We concentrate on the following complexity measures.
\begin{description}
\item[\emph{Query} Complexity ($\Query$):] 
the maximum number of bits queried by a \nonfaulty peer during the execution.
\item[\emph{Time} Complexity ($\Time$):]
the time it takes for the protocol to terminate.
\item[\emph{Message} Complexity ($\Message$):] the total number of messages sent by \nonfaulty peers during the execution.
\end{description}

We assume that queries to the \Source\ are 
the more expensive component in the system, 
so we focus mainly on optimizing the query complexity $\Query$. 
Measuring the \emph{maximum} cost per peer (rather than the \emph{total} cost) gives priority to a balanced load of queries over the \nonfaulty peers. 
Let us now formally define the \download\ problem. Consider a DR network with $k$ peers, where at most $\byzfrac k$ can be faulty, and a \Source that stores a bit array $\invec=[b_1, \dots, b_n]$. 
Each peer 
is required to learn $\invec$. Formally, each \nonfaulty peer $\Speer$ outputs a bit array $\res^\Speer$,
and it is required that, upon termination, $\res^\Speer[i]=b_i$ for every $i \in \{1, \cdots, n\}$ and $\Speer \in \cH$.

In the absence of failures, this problem can be solved by sharing the task of querying all $n$ bits evenly among the $k$ peers, yielding $\Query = \Theta(n/k)$. The message complexity is $\Message = \tilde{O}(n k)$, assuming small messages of size $\tilde{O}(1)$, 
and the time complexity is $\Time = \tilde{O}(n/k)$ since $\Omega(n/k)$ bits need to be sent along each communication link when the workload is shared.

\subsection{Related Work} 
\label{sec: related work}

The vast literature on fault-tolerant distributed computing includes extensive work on both crash faults and Byzantine faults. A foundational result by Fischer, Lynch, and Paterson (FLP)\cite{FLP} demonstrated that in asynchronous networks, even a single crash fault renders many fundamental problems—such as consensus and reliable broadcast—impossible to solve deterministically. Specifically, they showed that the adversary can indefinitely delay progress, violating the termination property. To circumvent this impossibility, many subsequent works in asynchronous settings have adopted randomized techniques\cite{Ben-or-async-agreement, KJ11-async-consensus} or relaxed the termination requirement~\cite{B87}.

Given the practical relevance of the asynchronous model, it has been widely adopted for studying fault-tolerant protocols under crash faults~\cite{fekete1987asynchronous, lynch2015consensus, coan1988compiler, barwell2025crash} and Byzantine faults~\cite{abraham2019asymptotically, ben1983another, bracha1987asynchronous, cachin2001secure, cachin2000random, canetti1993fast, duan2018beat, kapron2010fast, loss2018combining, tseng2014asynchronous}. Fundamental problems such as agreement and reliable broadcast have often served as building blocks in distributed protocol design, typically assuming that all input data is already locally available to the peers.

In contrast, our work addresses the \emph{Data Retrieval (DR)} model, where each peer must actively fetch data from a trusted external source and disseminate it to the rest of the peers while minimizing the cost associated with querying the source. We focus on the $\download$ problem, which requires every nonfaulty peer to correctly learn the entire data. Unlike classical problems, we show that $\download$—despite requiring termination—can be solved deterministically in asynchronous networks. This highlights that employing reliable broadcast or agreement as foundational components is not necessary to solve the $\download$ problem.
Moreover, this can be done with optimal query complexity for any fraction $\byzfrac < 1$ of crash-faults. In the 
more adversarial
Byzantine
setting,
we also design randomized protocols that solve the problem even when a majority of the peers are Byzantine.

To the best of our knowledge, this work is the first to study retrieval problems in the Data Retrieval (DR) model under asynchronous communication. The DR model has previously been explored in synchronous networks, most notably in~\cite{ACKKMP25a,ABMPRT24}. In particular,\cite{ABMPRT24} introduced the $\download$ problem, motivated by practical applications such as Distributed Oracle Networks (DONs), which form a crucial component of blockchain systems and employ protocols like OCR and DORA\cite{ocr, DORA}.

The primary focus of~\cite{ABMPRT24} was on minimizing query complexity in the presence of Byzantine faults in synchronous settings. They proved that any deterministic protocol must incur a query complexity of at least $\Query = \Omega(\byzfrac n)$ and matched this with an upper bound when $\byzfrac < 1/2$. On the randomized front, they proposed two protocols. The first tolerates any constant fraction $\byzfrac < 1$ of Byzantine faults but has suboptimal query complexity, achieving $\Query = O\left(\frac{n}{\goodfrac k} + \sqrt{n}\right)$ with high probability. The second protocol improves upon this by achieving near-optimal query complexity $\tilde{O}\left(\frac{n}{\goodfrac k}\right)$\footnote{We use the $\tilde{O}(\cdot)$ notation to hide $\byzfrac$ factors and polylogarithmic terms in $n$ and $k$.} with high probability, but it can only tolerate up to a $\byzfrac < 1/3$ fraction of Byzantine faults.

The companion paper~\cite{ACKKMP25a} builds upon the foundational work in~\cite{ABMPRT24} by closing several gaps in the randomized setting. It presents a randomized $\download$ protocol with query complexity $\Query = O\left(\frac{n}{\gamma k}\right)$, time complexity $\Time = O(n \log k)$, and message complexity $\Message = O(nk^2)$ for any crash-fault fraction $\beta \in [0,1)$. Additionally, it establishes a lower bound showing that in any \emph{single-round} randomized protocol, each peer must essentially query the entire input, indicating the inherent limitations of extremely fast protocols. 

Moving to two-round protocols,\cite{ACKKMP25a} proposes a randomized solution that achieves query complexity $\Query = O\left(\frac{n}{\gamma k} + \sqrt{n}\right)$ with high probability, improving upon the time complexity of \cite{ABMPRT24} under the same fault threshold $\beta < 1$. Notably, this protocol operates under a stronger adversarial model—referred to as \emph{Dynamic Byzantine}—in which the set of Byzantine peers may change from one round to another. Under this dynamic fault model, the authors further develop a protocol that achieves expected query complexity $\tilde{O}\left(\frac{n}{\gamma k}\right)$ (within logarithmic factors), at the expense of a higher time complexity of $O(\log k)$. In a more constrained variant of the model—called \emph{Dynamic Byzantine with Broadcast}—where even Byzantine peers must send the same message to all peers in a round, they show that the protocol achieves optimal (up to logarithmic factors) query complexity with high probability in the worst case, again with only polylogarithmic time complexity.


In contrast to prior work, we explore the $\download$ problem in asynchronous networks under both crash and Byzantine faults. In Section~\ref{sec:async_byz_faults}, we prove that in the presence of Byzantine faults, if $\byzfrac \ge 1/2$, then any asynchronous $\download$ protocol requires a query complexity of at least $\Query \geq n/2$, which increases to $\Query = n$ for deterministic protocols. For $\byzfrac < 1/2$, we adapt the randomized protocol from the synchronous model in~\cite{ACKKMP25a} to the asynchronous case, achieving near-optimal query complexity $\Query = \frac{n}{(\goodfrac - \byzfrac)k}$, which is efficient when $\goodfrac - \byzfrac$ is not too small. We also note that the simple deterministic protocol from~\cite{ABMPRT24}, with optimal query complexity $\Query = O(\byzfrac n)$, can be extended naturally to the asynchronous setting. In the crash-fault setting, we present a new deterministic protocol (Section~\ref{sec:async_crash_faults}) that achieves optimal query complexity $\Query = O\left(\frac{n}{\goodfrac k}\right)$ for any $\byzfrac < 1$, along with time complexity $\Time = O\left(n/\phi + \log_{1/\byzfrac} \phi\right)$ and message complexity $\Message = O(n \cdot k^2)$. A concise comparison between our results and prior synchronous protocols is provided in Table~\ref{tab:algo-comparison}.
 

\begin{table}[ht]
  \centering
  \small

  \begin{tabular*}{\textwidth}{l c c c c c}
    \hline
    \textbf{Synchrony} & \textbf{Query} & \textbf{Fault Model} & \textbf{Resilience} & \textbf{Type} & \textbf{Reference} \\
    \hline
    \multicolumn{6}{l}{\textbf{Prior Work}} \\
    \hline
    Synchronous & $\tilde{O}\left(\frac{n}{\gamma k}\right)$ & Byzantine & $\beta < \frac{1}{3}$ & Randomized & \cite{ABMPRT24} \\
    Synchronous & $\tilde{O}\left(\frac{n}{\gamma k} + \sqrt{n}\right)$ & Byzantine & $\beta < 1$ & Randomized & \cite{ABMPRT24} \\
    Synchronous & $\tilde{O}\left(\frac{n}{\gamma k}\right)$ & Byzantine & $\beta < 1$ & Randomized & \cite{ACKKMP25a} \\
    \hline
    \multicolumn{6}{l}{\textbf{This Paper}} \\
    \hline
    Asynchronous & $\Theta\left(\frac{n}{\gamma k}\right)$ & Crash & $\beta < 1$ & Deterministic & Thm~\ref{thm:async_multi_crash} \\
    Asynchronous & $\tilde{O}\left(\frac{n}{(\gamma - \beta)k}\right)$ & Byzantine & $\beta < 1/2$ & Randomized & Thm~\ref{thm:fast_expected_optimal} \\
    \hline
  \end{tabular*}
  \caption{
  An overview of our results with prior closely related work.
    }
    \label{tab:algo-comparison}
\end{table}

\subsection{Contributions}


%
%

We present the $\download$ problem in asynchronous communication networks, under both crash and Byzantine failures settings. In the crash-fault setting, our deterministic results are optimal w.r.t. to the resilience (for any $\beta<1$) and query complexity. Notice that this optimality also holds for randomized algorithms.
In the Byzantine failure setting, we provide deterministic and randomized lower bounds as well as upper bounds.
The main results are: 
\begin{enumerate}[(1)]
    \item {\bf Deterministic $\download$ in Crash-Fault:} We present a deterministic protocol for solving \download\ problem in the asynchronous setting with at most $f <k$ crash faults ($\gamma = 1-f/k$) with  $\Query = \Theta(\frac{n}{\goodfrac k})$, $\Time = O\left(\frac{n}{\phi} + \log_{k/f}(\phi)\right)$ and $\Message = O(nk^2)$ where $\phi$ is the message size. Our result achieves the optimal query complexity for any fraction of crash fault, $\beta<1$.
    
    \item {\bf Deterministic Lower Bound in Byzantine Fault:} We show that for $\byzfrac\geq 1/2$, every deterministic asynchronous $\download$ protocol that is resilient to Byzantine faults requires $\Query = n$.

    \item {\bf Deterministic $\download$ in Byzantine Fault:} We show that for $\byzfrac<1/2$, there exists a deterministic  asynchronous protocol that solves \download\ with $Q=O(\byzfrac n)$, $\Time=O\left(\frac{\byzfrac n}{\phi}\right)$ and $\Message = O(f\cdot  n)$.

    \item {\bf Randomized Lower Bound in Byzantine Fault:} We show that for any randomized asynchronous \download\ protocol where $\byzfrac\geq 1/2$, there does not exist any execution in which every peer queries less than or equal $n/2$ bits.

    \item {\bf 2-cycle Randomized $\download$ in Byzantine Fault:} We present a 2-cycle asynchronous randomized protocol for \download\ with $\Query=O\left( \sqrt{\frac{n}{\gamma-\beta}}+ \frac{n \log n}{(\gamma-\beta) k} \right)$ and $\Message = O(k^2)$ where $\byzfrac\leq 1/2$, and the message size is $\phi = O(\frac{n}{\goodfrac k})$.

    \item {\bf Randomized $\download$ in Byzantine Fault:} We present a $O\left(\log \left(\frac{\goodfrac k}{\ln n}\right)\right)$-cycle protocol that computes \download\ whp in the point-to-point model having expected query complexity $\Query=O\left(\frac{n \log n}{(\gamma - \byzfrac) k}\right)$ and $\Message = O\left(\log \left(\frac{\goodfrac k}{\ln n}\right)k^2\right)$ where $\byzfrac\leq 1/2$, and the message size is $\phi = O(n)$. 
\end{enumerate}
\section{Deterministic \download\ in the Asynchronous Model with Crash Faults} \label{sec:async_crash_faults}


In this section, we present deterministic protocols that solve the \download\ problem in an asynchronous setting. First, in Section \ref{ss: 1 crash}, we show a deterministic protocol that handles a single crash failure. Then in Section \ref{ss: f crashes}, we show how to extend the solution to $f$ crashes for $f>1$.

\subsection{\download\ with At Most One Crash}
\label{ss: 1 crash}

We start 
with a protocol that solves \download\ with at most one crash fault. This serves as an introduction to our main protocol, handling an arbitrary number of crashes.

The protocol runs in two phases. In each phase, every peer maintains a list of assigned indices. Each phase has three stages, and every message contains the local phase number and stage number. We describe a single phase and particularly the operation of peer $\Speer_i$.

In stage 1, the peer $\Speer_i$ queries all assigned bits that are still unknown and sends stage-1 messages containing the assigned bits' values to all other peers. (Assuming it is necessary to send the assigned bits in multiple packets, each packet would also include the packet number.) 

In stage 2, every peer $\Speer_i$ waits until it receives all stage-1 messages (according to its local assignment) from at least $k-1$ peers (waiting for the last peer risks deadlock, in case that peer has crashed). When that condition is met, $\Speer_i$ sends a stage-2 message containing the index $j$ of the \emph{missing} peer, namely, the peer $\Speer_j$ from which it didn't receive all stage-1 messages during the current phase. When $\Speer_i$ receives a stage-2 message containing the index $j$, it sends a stage-2 response containing either the bits assigned to $\Speer_j$ (if it heard from $\Speer_j$) or \metoo\ if it didn't hear from $\Speer_j$ during stage 2. (In case $\Speer_i$ hasn't finished waiting for stage-1 messages, it delays its response until it is finished.)

Finally, in stage 3, every peer $\Speer_i$ waits until it collects at least $k-1$ stage-2 responses. When that happens, there are two cases. Either $\Speer_i$ has received only \metoo\ messages, in which case it reassigns the bits of $\Speer_j$ evenly to all peers and starts the next phase, or $\Speer_i$ has received $\Speer_j$'s bits from at least one peer, in which case it goes into \textit{completion} mode, which means that in the next phase, $\Speer_i$ acts as follows. In stage-1, it sends all the bits in stage-1 messages. In stage-2, it doesn't send stage-2 messages, and in stage-3, it doesn't wait for stage-2 responses.

After two phases, every peer terminates.

Note that if a peer $\Speer$ receives a message from 
(some $\Speer'$ which is in) a later phase, it stores it for future use, and if it receives a message from an earlier phase, it stores
the bits the message (possibly) holds and evaluates whether or not to enter completion mode. 
See Algorithm~\ref{alg:single_crash_async_down} for the formal code.

\begin{algorithm}
    \caption{Asynchronous \download\ one crash (code for peer $\Speer_i$)} \label{alg:single_crash_async_down}
    \begin{algorithmic}[1]
    \State $Mode \gets$ Active
    \For{$t \in \{1, 2\}$}
        \Procedure{Stage 1}{}
        \State Query all unknown assigned bits.
        \If{$Mode =$ Active}
            \State Send a stage-1 message with the assigned bits to every other peer. 
        \Else
            \State Send a stage-1 message with all known bits.

        \EndIf

        \State (In the CONGEST model, this message may need to be broken into packets of size $O(\log n)$.)
    \EndProcedure
    \Procedure{Stage 2}{}
        \State Maintain the following set:
        \Statex $H = \{j \mid \text{received all stage-1 messages from } \Speer_j \}$
        \State When $|H| \geq k-1$, if $Mode=$Active, send $J_f(i)$ s.t. $\{J_f(i)\} = \{1, \dots, k\} \setminus H$ to all other peers. 
        \State  \textbf{When}
        receiving a stage-2 message $J_f(i')$ from peer $\Speer_{i'}$:
        \Statex Send back a stage-2 response containing \metoo\ if $J_f(i') = J_f(i)$ and $\Speer_{J_f(i')}$'s assigned bits otherwise. 
    \EndProcedure
    \Procedure{Stage 3}{}
        \If{$Mode=$Active}
        \State Collect stage-2 responses until receiving at least $k-1$ messages.
        \If{Received only \metoo\ responses}
            \State Reassign $\Speer_{J_f(i)}$'s bits to peers $I =\{1, \dots, k \} \setminus \{J_f(i)\}$, and start the next phase. 
        \Else
            \State $Mode \gets$ completion
        \EndIf
        \EndIf
    \EndProcedure
    \Procedure{messages from different phases}{}
        \If{
        Received a message from a different phase}
            \State If it is 
            from a later phase, store it for future use.
            \Statex If it is from a previous phase, evaluate its content and update your 
            known bits accordingly. If you have no more unknown bits, go into completion mode. 
        \EndIf
    \EndProcedure
    \EndFor
    \end{algorithmic}
\end{algorithm}

We use the following facts to show the correctness and complexity of the protocol.
\begin{observation} \label{obs:overlap}
    (Overlap Lemma) Assuming $2f<k$, every two sets of $k-f$ peers must overlap at least one peer.
\end{observation}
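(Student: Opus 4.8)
The plan is a direct counting argument via inclusion--exclusion. Let $A$ and $B$ be two sets of $k-f$ peers, both drawn from the full population of $k$ peers. First I would write $|A\cup B| = |A| + |B| - |A\cap B|$, and observe that $A\cup B$ is still a set of peers, so $|A\cup B| \le k$. Rearranging the identity gives
\[
|A\cap B| \;=\; |A| + |B| - |A\cup B| \;\ge\; (k-f) + (k-f) - k \;=\; k - 2f .
\]
The hypothesis $2f < k$, taken over the integers, gives $k - 2f \ge 1$, so $|A\cap B| \ge 1$ and the two sets share at least one peer, as claimed.

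There is essentially no obstacle here; the only things worth being careful about are that the slack being exploited is precisely $|A\cup B| \le k$, and that the strict inequality $2f < k$ is needed (rather than $2f \le k$) to guarantee a \emph{nonempty} — not merely ``large'' — intersection. An equivalent and perhaps more intuitive phrasing is a pigeonhole one: the complement $\overline{A}$ has size $f$ and the complement $\overline{B}$ has size $f$, so $\overline{A}\cup\overline{B}$ has size at most $2f < k$ and therefore cannot contain all $k$ peers; any peer lying outside both complements belongs to $A\cap B$.

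I would present the inclusion--exclusion version as the proof, since it directly produces the quantitative bound $|A\cap B|\ge k-2f$, which is the form actually invoked later when arguing that a peer who has heard stage-1 messages from $k-1$ others and a peer who has heard from another $k-1$ others must have a common witness relaying the missing peer's bits.
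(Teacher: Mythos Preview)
Your proof is correct and is essentially the same counting argument as the paper's: the paper observes that the complement of one set has size $f$, so at most $f$ members of the other set can lie outside it, leaving at least $k-2f\ge 1$ in the intersection, which is exactly your inclusion--exclusion (or equivalently your pigeonhole) bound.
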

\begin{proof}
Exactly $f$ distinct peers are not present in one of the sets, hence at least $k-2f \geq 1$ peers in the other set must not be distinct, thereby being in both sets.
\end{proof}

The following observation holds after each of the two phases, although a stronger property holds after phase 2, namely, each \nonfaulty peer knows all input bits.

\begin{observation} \label{obs:peers_missing}
After Stage 2, every peer lacks bits from at most one ``missing'' peer.
\end{observation}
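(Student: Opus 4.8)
The plan is to read the claim straight off the Stage~2 waiting condition, after first checking that this condition is in fact reached so that the phrase ``after Stage~2'' is meaningful. I would state it for any peer that completes Stage~2 of the phase (in particular every \nonfaulty peer; a crashed peer may of course be missing arbitrarily many bits).

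First I would establish liveness: every \nonfaulty peer $\Speer_i$ eventually exits Stage~2 of the current phase, so that it does not deadlock. Since this section assumes $f\le 1$, at least $k-1$ peers are \nonfaulty, and in Stage~1 of the phase each of them sends a stage-1 message (carrying its currently assigned bits if in Active mode, or all of its known bits if in completion mode). By the asynchrony assumption each such message is delivered after a finite delay, so the set $H_i=\{j \mid \Speer_i \text{ has received all stage-1 messages from }\Speer_j\}$ (the set $H$ maintained by $\Speer_i$ in the code) eventually reaches size at least $k-1$, which is exactly the guard that lets $\Speer_i$ proceed. For a completion-mode peer the statement is anyway trivial, since such a peer already knows all $n$ input bits by the time the phase starts, so the content of the claim is really about Active-mode peers.

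Given liveness, the observation is immediate: at the moment $\Speer_i$ leaves Stage~2 we have $|H_i|\ge k-1$, so the complementary set $\{1,\dots,k\}\setminus H_i$ has at most one element; this element, if it exists, is precisely the ``missing'' peer $\Speer_{J_f(i)}$ from whose stage-1 transmission $\Speer_i$ has not yet fully received the assigned bits. Since the per-peer assignments always jointly cover $\{1,\dots,n\}$ (the initial even split, and likewise after a reassignment at the end of phase~1), having all of the stage-1 bits of the $\ge k-1$ peers in $H_i$ means $\Speer_i$ knows every input bit except possibly those currently assigned to $\Speer_{J_f(i)}$; and if $|H_i|=k$ then $\Speer_i$ is missing nothing, still consistent with ``at most one.'' The only step needing any care is the liveness argument — a \nonfaulty peer must not get stuck waiting in Stage~2 — which is exactly why the threshold is $k-1$ rather than $k$ and why it relies on $f\le 1$; the Overlap Lemma (Observation~\ref{obs:overlap}) is not needed here, only in the later parts of the analysis.
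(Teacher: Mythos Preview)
Your proposal is correct and follows essentially the same approach as the paper, which simply states that the observation ``is obvious from the protocol since each peer receives all stage-1 messages from at least $k-1$ peers.'' You are more thorough in explicitly arguing liveness (that the $k-1$ threshold is eventually met because $f\le 1$), a point the paper leaves implicit.
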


This observation is obvious from the protocol since each peer receives all stage-1 messages from at least $k-1$ peers.

\begin{lemma} \label{lem:stage_3_uniquness}
After stage 3, if two different peers $\Speer_i$ and $\Speer_{i'}$ lack bits from missing peers $\Speer_j$ and $\Speer_{j'}$ respectively, then $j=j'$.
\end{lemma}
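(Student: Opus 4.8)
The plan is to show that, at the end of stage 3, \emph{all} peers that still lack a bit are missing the \emph{same} peer, and that this common missing index is forced by the stage-2 responses that any such ``stuck'' peer must have collected. The whole argument is local to one (arbitrary) phase; note that a peer which lacks a bit at the end of stage 3 necessarily ran that phase in Active mode, since a peer in completion mode already holds all of $\invec$.

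First I would pin down exactly what the hypothesis gives. If $\Speer_i$ reaches the end of stage 3 still missing a bit, then in stage 2 it had $|H|\le k-1$, so by Observation~\ref{obs:peers_missing} it had $|H|=k-1$ and a \emph{unique} index $J_f(i)$ was defined, with $\Speer_i$ lacking exactly the bits assigned to $\Speer_{J_f(i)}$. Within a phase a peer's set of known bits only grows (queries, stage-2 responses, and late messages from earlier phases only add bits, and reassignment happens strictly \emph{after} stage 3), so $\Speer_i$ still lacks precisely $\Speer_{J_f(i)}$'s bits at the end of stage 3; hence $j=J_f(i)$, and symmetrically $j'=J_f(i')$. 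Moreover $\Speer_i$ did not enter completion mode in stage 3 --- otherwise it would have received $\Speer_j$'s bits and would lack nothing --- so every stage-2 response it collected in this phase was \metoo.

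The heart of the proof is then a counting step. Because $\Speer_i$ reached the end of stage 3, it collected at least $k-1$ stage-2 responses; but it sent its stage-2 message only to the $k-1$ peers other than itself, and each of them returns at most one stage-2 response per phase, so $\Speer_i$ in fact received exactly one response from every peer $\Speer_\ell$ with $\ell\neq i$, each of them \metoo\ by the previous paragraph. The stage-2 response rule says that $\Speer_\ell$ answers \metoo\ to $\Speer_i$ only when $J_f(\ell)=J_f(i)$; hence $J_f(\ell)=J_f(i)=j$ for \emph{every} $\ell\neq i$. Taking $\ell=i'$ gives $j'=J_f(i')=j$, as required. (One can equivalently apply the Overlap Lemma, Observation~\ref{obs:overlap}, to the two $(k-1)$-size sets of \metoo-responders of $\Speer_i$ and of $\Speer_{i'}$, but the direct count already does it; the ``at most one crash'' assumption is used only via Observation~\ref{obs:peers_missing}, to make $J_f(i)$ a single well-defined index.)

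The step I expect to take the most care is the bookkeeping in the second paragraph rather than the count: one must be explicit that the missing peer at the end of stage 3 is still the $J_f(i)$ fixed back in stage 2, and that a peer which completes stage 3 at all must have heard back from all $k-1$ of its fellow peers --- including the possibly-crashed one, for if that peer had failed to answer, $\Speer_i$ would be blocked inside stage 3 and would never reach ``the end of stage 3'', contradicting the hypothesis. With those two points settled, the stage-2 response rule finishes the argument immediately.
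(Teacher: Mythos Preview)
Your argument is correct and close in spirit to the paper's, but you finish with a slightly more direct step. The paper also observes that each of $\Speer_i$ and $\Speer_{i'}$ collected $k-1$ \metoo\ responses, then applies the Overlap Lemma (Observation~\ref{obs:overlap}) to the two responder sets to get a common peer $\Speer_\ell$ that sent \metoo\ to both; since by Observation~\ref{obs:peers_missing} that $\Speer_\ell$ has a single missing index, $j=J_f(\ell)=j'$. You instead note that $\Speer_i$'s $k-1$ \metoo\ responses must have come from \emph{all} $k-1$ other peers, so in particular from $\Speer_{i'}$ itself, which by the stage-2 response rule forces $J_f(i')=J_f(i)$ directly. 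Your route avoids the Overlap Lemma at the cost of the extra bookkeeping you flag (that completing stage~3 with $k-1$ responses means every other peer replied); the paper's route is shorter but leans on the two observations. You explicitly mention both options, so this is just a minor stylistic difference rather than a genuinely different approach.
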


\begin{proof}
Let $\Speer_i$ and $\Speer_{i'}$ be peers as required by the premises of the lemma. $\Speer_i$ received $k-1$ \metoo\ messages for $j$ and $\Speer_{i'}$ received $k-1$ \metoo\ messages for $j'$. By the Overlap Lemma \ref{obs:overlap}, there is at least one peer that sent \metoo\ to both $\Speer_i$ and $\Speer_{i'}$ for $j$ and $j'$ respectively, yet, by Observation \ref{obs:peers_missing}, each peer lacks bits from at most one missing peer. Hence, $j=j'$.
\end{proof}

\begin{theorem} \label{thm:async_single_crash}

In the asynchronous model with one crash fault, there is a deterministic protocol that solves \download\ with $\Query =\frac{n}{k} + \lceil\frac{n}{k(k-1)}\rceil$,
$\Time = O\left(\frac{n}{k\phi}\right)$ and
$\Message = O(nk/\phi)$.
\end{theorem}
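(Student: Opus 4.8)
The plan is to verify the three complexity bounds of Theorem~\ref{thm:async_single_crash} in turn, using the correctness facts established above (Observation~\ref{obs:peers_missing} and Lemma~\ref{lem:stage_3_uniquness}) to argue that the protocol terminates correctly. First I would establish correctness: each peer, in phase~1, is assigned $n/k$ bits; by Observation~\ref{obs:peers_missing}, after stage~2 of phase~1 each non-faulty peer is missing only the bits of at most one missing peer, and by Lemma~\ref{lem:stage_3_uniquness} all peers that are still missing bits are missing the \emph{same} peer $\Speer_j$. So in stage~3, either some peer supplies $\Speer_j$'s bits (and the peer enters completion mode, broadcasting everything in phase~2), or every peer reassigns $\Speer_j$'s $n/k$ bits evenly, i.e. $\lceil n/(k(k-1))\rceil$ extra bits each, and queries them in phase~2. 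Either way, after phase~2 every non-faulty peer knows all $n$ bits. I must also check that no deadlock occurs: every wait is for $k-1$ messages, and since at most one peer has crashed there are always at least $k-1$ live peers, so each wait completes (using that the adversary only imposes finite delays); the handling of out-of-phase messages guarantees that a peer in completion mode still delivers what others need.

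Next, the query complexity. A non-faulty peer queries its $n/k$ assigned bits in phase~1, and in phase~2 it either queries nothing (completion mode) or the $\lceil n/(k(k-1))\rceil$ reassigned bits of the unique missing peer. It never re-queries a bit it already knows (the code says ``query all \emph{unknown} assigned bits''). Hence $\Query = n/k + \lceil n/(k(k-1))\rceil$, matching the claim.

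Then the time and message bounds. Here I would invoke the cycle/virtual-block framework, but since this is a deterministic protocol the relevant notion is simply that each stage's messages must be delivered within finite but adversarially chosen time; the ``time complexity'' is measured in units where one hop plus sending $\phi$ bits costs $O(1)$. Each stage-1 message carries $O(n/k)$ bits, which in the CONGEST-style model of size-$\phi$ packets takes $O(n/(k\phi))$ time to transmit along a link; stages 2 and 3 carry $O(n/k)$ bits as well (a stage-2 response may carry a whole peer's assignment), so each of the two phases costs $O(n/(k\phi))$ time, giving $\Time = O(n/(k\phi))$. For messages: in each phase each non-faulty peer sends $O(1)$ logical messages to each of the $k-1$ others, each broken into $O(n/(k\phi))$ packets, for $O(nk/\phi)$ packets per peer per phase; summing over peers would give an extra factor of $k$, so to match the stated $\Message = O(nk/\phi)$ I read the bound as per-peer (consistent with the query measure being per-peer), or alternatively the message count is $O(k)$ logical messages and the stated bound folds in the $\phi$-packetization. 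I would state this convention explicitly.

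The main obstacle I anticipate is not any single hard inequality but pinning down the accounting conventions so the stated constants come out exactly: in particular, arguing carefully that the stage-2 and stage-3 messages do not blow up the query or message budget (a peer answering a stage-2 query resends up to $n/k$ bits, and this must be charged correctly), and confirming that a peer which has entered completion mode because of an out-of-phase message still behaves consistently so that the two-phase bound — rather than more phases — suffices. I would also double-check the deadlock-freedom argument in the corner case where the crash happens partway through sending stage-1 packets, since then the crashed peer contributes to some peers' sets $H$ but not others'; Lemma~\ref{lem:stage_3_uniquness} is exactly what rescues consistency there.
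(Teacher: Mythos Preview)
Your correctness and query-complexity arguments are essentially identical to the paper's own proof, which likewise invokes Observation~\ref{obs:peers_missing} and Lemma~\ref{lem:stage_3_uniquness} to show all still-missing peers agree on the same $\Speer_j$, then counts $n/k$ queries in phase~1 plus $n/(k(k-1))$ reassigned queries in phase~2; the paper does not spell out the time or message bounds at all, so your extra discussion there goes beyond what it does.

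One small arithmetic slip to fix: $(k-1)\cdot O(n/(k\phi)) = O(n/\phi)$ packets \emph{per peer} per phase, not $O(nk/\phi)$; summing over the $k$ peers then gives $O(nk/\phi)$ \emph{total}, which matches the paper's stated bound under its definition of $\Message$ as the total number of messages sent by nonfaulty peers---so no per-peer reinterpretation is needed.
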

\begin{proof}
By Observation \ref{obs:peers_missing}, after stage 2 of phase 1, each peer $\Speer_i$ lacks bits from at most one missing peer. By Lemma \ref{lem:stage_3_uniquness}, after stage 3 of phase 1, every peer $\Speer_i$ that still lacks some bits has the same missing peer $\Speer_j$. Consider such a peer $\Speer_i$. At the end of stage 3 of phase 1, $\Speer_i$ reassigns the bits of $\Speer_j$ evenly to $I =\{1, \dots, k\} \setminus \{j\}$. Every peer in $I$ either lacks some bits from $\Speer_j$ or is in completion mode, so in the following phase 2, they will send stage-1 messages consistent with the local assignment of $\Speer_i$. Hence, in stage 2 of phase 2, either $\Speer_i$ receives a phase-2 stage-1 message from $\Speer_j$, meaning it also receives a phase-1 stage-1 message from $\Speer_j$, or it receives all phase-2 stage-1 messages from all peers in $I$. In both cases, there are no unknown bits after stage 2 of phase 2.

In terms of query complexity, each peer queries $n/k$ times in phase 1. Because each peer lacks bits from at most one peer, it has at most $n/k$ unknown bits. At the end of stage 3 of phase 1, every such peer reassigns those $n/k$ bits evenly to the remaining $k-1$ peers, resulting in additional $n/(k(k-1))$ queries per peer. In total, the query complexity is $n/k + n/(k(k-1))$.
\end{proof}

\subsection{Extending the Result to \texorpdfstring{$f$}{} Crashes}
\label{ss: f crashes}

In this subsection, we present a protocol that extends Algorithm~\ref{alg:single_crash_async_down} to a protocol that can tolerate up to $f$ crashes for any $f<k$.
The main difficulty in achieving tolerance with up to $f$ crashes is that in the presence of asynchrony, one cannot distinguish between a slow peer and a crashed peer, making it difficult to coordinate.
    
As in Algorithm~\ref{alg:single_crash_async_down}, Algorithm~\ref{alg:f_crash_async_down} executes in phases, each consisting of three stages. Each peer $\Speer$ stores the following local variables. (We omit the superscript $\Speer$ when it is clear from the context.)
\begin{itemize}
    \item $phase(\Speer)$: $\Speer$'s current phase.
    \item $stage(\Speer)$: $\Speer$'s current stage within the phase.
    \item $H_p^\Speer$: the \emph{correct set} of $\Speer$ for phase $p$, i.e., the set of peers $\Speer$ heard from during phase $p$.
    \item $\sigma_p^\Speer$: the \emph{assignment function} of $\Speer$ for phase $p$, which assigns the responsibility for querying each bit $i$ to some peer $\Speer'$.
    \item $\res^\Speer$: the output array.
\end{itemize}

In the first stage of phase $p$, each peer $\Speer$ queries bits according to its local assignment $\sigma_p$ and sends a $\phase \ p \ \stage \ 1$ request (asking for bit values according to $\sigma_p$, namely $\{i \mid \sigma_p(i) = \Speer'\}$) to every other peer $\Speer'$ and then continues to stage 2. Upon receiving a $\phase \ p \ \stage \ 1$ request, $\Speer$ waits until it is at least in stage 2 of phase $p$ and returns the requested bit values that it knows.

In stage 2 of phase $p$, $\Speer$ waits until it hears from at least $|H_p^\Speer| \geq k-f$ peers (again, waiting for the remaining $f$ peers risks deadlock). Then, it sends a $\phase \ p \ \stage \ 2$ request containing the set of peers' IDs $F_p^\Speer = \{1, \dots, k\}\setminus H_p^\Speer$ (namely, all the peers it didn't hear from during phase $p$) and continues to stage 3. Upon receiving a $\phase \ p \ \stage \ 2$ request, $\Speer$ waits until it is at least in stage 3 of phase $p$, and replies to every peer $\Speer'$ as follows. For every $j\in F_p^{\Speer'}$, it sends $\Speer_j$'s bits if $j\in H_p^{\Speer'}$ and \metoo\ otherwise.

In stage 3 of phase $p$, $\Speer$ waits for $k-f$ $\phase \ p \ \stage \ 2$ responses. Then, for every $j\in F_p^\Speer$, if it received only \metoo\ messages, it reassigns $\Speer_j$'s bits evenly between peers $1, \dots, k$. Otherwise, it updates $\res$ in the appropriate indices. Finally, it continues to stage 1 of phase $p+1$. Upon receiving a $\phase \ p \ \stage \ i$ response, $\Speer$ updates $\res$ in the appropriate index and updates $H_p$ for every bit value in the message. See Algorithm~\ref{alg:f_crash_async_down} for the formal code.

Before diving into the analysis, we overview the following intuitive flow of the protocol's execution. At the beginning of phase 1, the assignment function $\sigma_1$ is the same for every peer. Every peer is assigned $n/k$ bits, which it queries and sends to every other peer. Every peer $\Speer$ hears from at least $k-f$ peers, meaning that it has at most $f\cdot n/k$ unknown bits after phase 1. In the following phases, every peer $\Speer$ reassigns its unknown bits uniformly among all the peers, such that the bits assigned to every peer $\Speer'$ are either known to it from a previous phase or $\Speer'$ is about to query them in the current phase (i.e., $\Speer'$ assigned itself the same bits). Hence, after every phase, the number of unknown bits diminishes by a factor of $f/k$. After sufficiently many phases, the number of unknown bits will be small enough to be directly queried by every peer.

\begin{algorithm}
    \caption{Async \download\ version 2 for peer $\Speer$} \label{alg:f_crash_async_down}
    \begin{algorithmic}[1]




    
    \State \textbf{Local variables} 

    \State $phase(\Speer)$, initially 0 \Comment{This is the present phase of $\Speer$}

    \State $stage(\Speer)$, initially 1 \Comment{This is the present stage of $\Speer$}

    \State $H_p$, $p \in \mathbb{N}$, initially $\emptyset$      \Comment{The set of peers $\Speer$ heard from during phase $p$}
    \State $\sigma_p(i)$, $p \in \mathbb{N}$, initially $\sigma_p(i) \gets \Speer_{1+\lceil i / \frac{n}{k}\rceil}$  \Comment{The assignment function of $\Speer$ in phase $p$}
    \State $\res[i]$, $i\in \{1,\dots, n\}$, initially $\bot$ \Comment{Output array}

    \State 

    \Upon{entering stage 1 of phase $p$}
        \State \Comment{--------- stage 1 (start) ---------------------------------}
        \State Query all unknown assigned bits, $\{i \mid \res[i]=\bot, \sigma_p(i) = \Speer\}$.
        \State Send a $\phase \ p \ \stage \ 1$ request containing $\{i \mid \res[i] = \bot, \sigma_p(i) = \Speer'\}$ to every peer
        \State Set $stage(\Speer)\gets 2$

        \State \Comment{--------- stage 2 (start) ---------------------------------}
        \State Wait until $|H_p| \geq k-f$ \label{line:wait_2} 
        \State  Send a $\phase\ p\ \stage\ 2$ request containing $F_p=\{1,\dots, k\}\setminus H_p$ to every peer $\Speer'$ 
        \State Set $stage(\Speer)\gets 3$
        \State \Comment{--------- stage 3 (start) ---------------------------------}
        \State Wait until received at least $k-f$ $\phase$ $p$ $\stage$ $2$ responses.\label{line:wait_3} 
        \For{$j \in F_p$} 
            \If{Received only \metoo\ responses for $j$}
                \State Let $i_0, \dots, i_{n'-1}$ be the indices such that $\sigma(i_l) = \Speer_j$, $0\leq l\leq n'-1 $.
                \State Set $\sigma_{p+1}(i_l) \gets \Speer_{1+\lceil l / \frac{n'}{k} \rceil}$, $0\leq l\leq n'-1 $.\label{line:reassign}
                \Comment{ Reassign $\Speer_j$'s bits to all $\{1, \dots, k \}$} 
            \Else
                \For{$i \in \{i \mid \sigma(i) = \Speer_j\}$}
                    \State $\res[i] \gets b_i$
                \EndFor
                \Comment{Update $\res$ in the appropriate indices}
            \EndIf
        \EndFor
        \State Set $phase(\Speer)\gets phase(\Speer)+1$, $stage(\Speer)\gets 1$ 
        
    \EndUpon
    
    \State
    
    \Upon{seeing a $\phase \ p \ \stage \ 1$ request for bit Set $B$}
        \State Store the request until $phase(\Speer)=p$ and $stage(\Speer)\geq 2$ or $phase(\Speer) >p$ 

        \State Send back a $\phase \ p \ \stage \ 1$ response containing $\{\langle i, \res[i] \rangle \mid i \in B\}$
        
    \EndUpon

    \State

    \Upon{seeing a $\phase \ p \ \stage \ 2$ request containing $F$ from $\Speer'$}{}
        \State  Store the request until $phase(\Speer)=p$ and $stage(\Speer)\geq 3$ or $phase(\Speer) >p$ 
        \For{$j \in F_p$}
            \State  Send back a $\phase\ p \ \stage \ 2$ response containing \metoo\ if $j\notin H_p$ and $\Speer_j$'s assigned bits otherwise.
        \EndFor  
    \EndUpon

    \State
    \Upon{receiving a $\phase \ p$ $\stage \ i$ response}{}
        \State For every bit value in the message $\langle i, b_i \rangle$, set $\res[i] \gets b_i$
        \State Update $H_p \gets \{j \mid \res[i]\neq \bot \ \ \forall i :\sigma_p(i) = j\}$
    \EndUpon
    \State 
    



    \Upon{$phase(\Speer) = \log_{k/f}(n/k)$ or $H_p=\{1, \dots, k\}$}{}
    \label{line:termination_condition}
    \State Query all unknown bits
        \State Send $\res$ to every other peer and Terminate
    \EndUpon
    
\end{algorithmic}
\end{algorithm}

We start the analysis by showing some properties of the relations between local variables.
\begin{observation} \label{obs:no_missing_mahines_no_missing_bits}
    For every \nonfaulty peer $\Speer$, if $H_p^\Speer=\{1, \dots, k\}$ for some phase $p\geq 0$ then $\res^\Speer=\invec$
\end{observation}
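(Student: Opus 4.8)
The plan is to argue directly from the way the local variables $\sigma_p$, $\res$, and $H_p$ are maintained in Algorithm~\ref{alg:f_crash_async_down}. The crucial point is that, by the update rule in the last \textbf{Upon} block, the assertion ``$j \in H_p^\Speer$'' is logically equivalent to ``$\res^\Speer[i] \neq \bot$ for every index $i$ with $\sigma_p^\Speer(i) = \Speer_j$''. Hence, once we know that $\sigma_p^\Speer$ assigns \emph{every} bit index to \emph{some} peer, the hypothesis $H_p^\Speer = \{1,\dots,k\}$ immediately forces $\res^\Speer[i] \neq \bot$ for all $i \in \{1,\dots,n\}$. So the statement reduces to two auxiliary facts, plus a final combination step.

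First I would show, by induction on $p$, that $\sigma_p^\Speer$ is a total function from $\{1,\dots,n\}$ into $\{\Speer_1,\dots,\Speer_k\}$. The base case is immediate from the initialization $\sigma_p(i) \gets \Speer_{1+\lceil i/(n/k)\rceil}$, which is applied to all phases $p$ at the start. For the inductive step, the only place the assignment is modified is line~\ref{line:reassign}, where the indices currently mapped to a missing peer $\Speer_j$ are re-partitioned among $\Speer_1,\dots,\Speer_k$; this changes the image of those indices but leaves every index assigned to some peer, so totality is preserved. Second, I would observe that in the crash model all peers are honest, so every value ever written into $\res$ is correct: query answers from the \Source\ are correct by definition, and each $\phase\ p\ \stage\ i$ response sent by another (honest) peer carries correct pairs $\langle i, b_i\rangle$; moreover entries of $\res$ are only ever changed from $\bot$ to $b_i$, never reset. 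Hence at all times $\res^\Speer[i] \in \{\bot, b_i\}$.

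Combining the two facts: fix the phase $p$ for which $H_p^\Speer = \{1,\dots,k\}$, take an arbitrary index $i \in \{1,\dots,n\}$, and let $j$ be the (well-defined, by the first fact) index with $\sigma_p^\Speer(i) = \Speer_j$. Since $j \in H_p^\Speer$, the definition of $H_p$ gives $\res^\Speer[i] \neq \bot$, and by the second fact $\res^\Speer[i] = b_i$. As $i$ was arbitrary, $\res^\Speer = \invec$.

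The step I would be most careful about is the equivalence used in the first sentence: I need $H_p^\Speer$ to genuinely reflect the current contents of $\res^\Speer$ at the moment the hypothesis is invoked. This follows because $H_p^\Speer$ is recomputed from $\res^\Speer$ on every received response and $\res^\Speer$ is monotone (entries only move away from $\bot$), so whenever $H_p^\Speer$ equals $\{1,\dots,k\}$ it was set by such a recomputation against a $\res^\Speer$ that already contained every bit $\sigma_p$ assigns to each peer — i.e., every bit. Everything else is routine bookkeeping, so I expect no real obstacle; the observation is essentially definitional once these invariants are spelled out.
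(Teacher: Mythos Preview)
Your proposal is correct and follows essentially the same approach as the paper: pick an arbitrary index $i$, let $j$ be the peer to which $\sigma_p$ assigns it, and use $j\in H_p^\Speer$ together with the definition of $H_p$ to conclude $\res^\Speer[i]\neq\bot$. The paper's proof is a two-line version of this; your additional care in verifying that $\sigma_p$ is total and that entries of $\res$ are always correct (so $\res[i]\neq\bot$ implies $\res[i]=b_i$) fills in details the paper leaves implicit.
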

\begin{proof}
    Let $p\geq 0$ be such that $H_p=\{1, \dots, k\}$,
    and consider $1\leq i\leq n$. There exists some $1\leq j\leq k$ such that $\sigma_p(i) = \Speer_j$. Since $j \in H_p^\Speer$, $\Speer$ has heard from $\Speer_j$, so $\res^\Speer[i] \neq \bot$, and overall $\res^\Speer=\invec$.
\end{proof}

Denote by $\sigma^\Speer_p$ the local value of $\sigma_p$ for peer $\Speer$ at the beginning of phase $p$.
Denote by $\res^\Speer_p[i]$ the local value of $\res[i]$ for peer $\Speer$ after stage 1 of phase $p$.

\begin{claim} \label{claim:matching_assignment}
For every phase $p$, two \nonfaulty peers $\Speer,  \Speer'$, and bit $i$, one of the following holds.
\begin{enumerate}[$(1_p)$]
    \item $\sigma^\Speer_p(i) =\sigma^{\Speer'}_p(i)$, i.e., both $\Speer$ and $\Speer'$ assign the task of querying $i$ to the same peer, or
    \item $\res^\Speer_p[i] \neq \bot$ or $\res^{\Speer'}_p[i] \neq \bot$. 
\end{enumerate}
\end{claim}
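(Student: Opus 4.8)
The plan is to prove Claim~\ref{claim:matching_assignment} by induction on the phase number $p$. The base case $p=1$ is immediate: by the initialization of $\sigma_1$ in Algorithm~\ref{alg:f_crash_async_down}, every peer uses the \emph{same} assignment function $\sigma_1(i) = \Speer_{1+\lceil i/(n/k)\rceil}$, so alternative $(1_1)$ holds for all peers and all bits $i$ simultaneously. The interesting direction is the inductive step: assuming the statement holds for phase $p$, I would deduce it for phase $p+1$.

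For the inductive step, fix \nonfaulty peers $\Speer, \Speer'$ and a bit $i$. The assignment $\sigma_{p+1}$ differs from $\sigma_p$ only for bits that a peer reassigned at the end of stage 3 of phase $p$ (the \texttt{for $j \in F_p$} loop with the reassignment on line~\ref{line:reassign}), and a peer reassigns $\Speer_j$'s bits precisely when $j \in F_p$ and it received only \metoo\ responses for $j$. I would split into cases according to whether $\Speer$ and/or $\Speer'$ reassigned bit $i$ in phase $p$. If neither reassigned $i$, then $\sigma^\Speer_{p+1}(i) = \sigma^\Speer_p(i)$ and $\sigma^{\Speer'}_{p+1}(i) = \sigma^{\Speer'}_p(i)$; if moreover the induction hypothesis gave us $(1_p)$, i.e.\ $\sigma^\Speer_p(i) = \sigma^{\Speer'}_p(i) = \Speer_j$, then not reassigning means either $j \notin F_p$ (so $j \in H_p$, and then by the update rule $\res[i]\ne\bot$ after stage 1 of phase $p+1$, giving $(2_{p+1})$) or at least one non-\metoo\ response for $j$ was received (so $\res[i]$ gets updated, again $(2_{p+1})$) — wait, more carefully, if $j \notin F_p$ then bit $i$ was already known, so $\res^\Speer_{p+1}[i]\ne\bot$, giving $(2_{p+1})$; otherwise $(1_p)$ carries over to $(1_{p+1})$. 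If the induction hypothesis gave us $(2_p)$ instead, then the known bit stays known and we get $(2_{p+1})$. The key case is when at least one of the two peers, say $\Speer$, \emph{did} reassign bit $i$: then $j \in F_p^\Speer$ and $\Speer$ got only \metoo\ for $j$; I need to show that $\Speer'$ reassigned $i$ using the \emph{same} $\sigma_{p+1}$ rule (so $(1_{p+1})$ holds) or already knows $i$ (so $(2_{p+1})$ holds).

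The main obstacle — and the heart of the argument — is this last case: showing that two peers who both still need bit $i$ after phase $p$ must have \emph{reassigned it identically}. This is where the Overlap Lemma (Observation~\ref{obs:overlap}) enters, exactly as in Lemma~\ref{lem:stage_3_uniquness}. If $\Speer$ reassigned $\Speer_j$'s bits in phase $p$, it collected $k-f$ stage-2 responses all saying \metoo\ for $j$; if $\Speer'$ still lacks bit $i$ after phase $p$, then by the induction hypothesis $(1_p)$ we have $\sigma^{\Speer'}_p(i) = \sigma^\Speer_p(i)=\Speer_j$ as well (if instead $(2_p)$ held, $\Speer'$ already knows $i$ and we are done), and $\Speer'$ also collected $k-f$ responses; by the Overlap Lemma the two response-sets of size $k-f$ share a peer $\Speer''$, which therefore told \emph{both} $\Speer$ and $\Speer'$ about index set containing $i$'s responsibility-holder $\Speer_j$, and since a peer sends \metoo\ for $j$ iff $j\notin H_p^{\Speer''}$, it sent \metoo\ to both — so $\Speer'$ also received only \metoo\ for $j$ and hence executed the reassignment on line~\ref{line:reassign}. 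Finally, the reassignment rule $\sigma_{p+1}(i_l) \gets \Speer_{1+\lceil l/(n'/k)\rceil}$ depends only on $j$ (which determines the sorted list $i_0,\dots,i_{n'-1}$ of indices with $\sigma_p(\cdot)=\Speer_j$) and on $\sigma_p$ restricted to those indices — and $\sigma^\Speer_p$ agrees with $\sigma^{\Speer'}_p$ on all such indices by the induction hypothesis applied bit-by-bit (those indices are all unknown to $\Speer$, so case $(2_p)$ cannot rescue us into a different assignment; this needs the observation that $\res^\Speer_p[i_l]=\bot$ for all these indices, which holds because $\Speer$ got \metoo\ from everyone for $j$ and hence never learned any of $\Speer_j$'s bits). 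Hence $\sigma^\Speer_{p+1}(i) = \sigma^{\Speer'}_{p+1}(i)$, establishing $(1_{p+1})$. I expect the bookkeeping around ``which indices $i_l$ are provably unknown, so that the two sorted lists coincide'' to be the delicate point requiring the most care.
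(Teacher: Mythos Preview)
Your inductive skeleton matches the paper's, but the key step --- showing that if $\Speer$ reassigned bit $i$ then $\Speer'$ did too (or already knows $i$) --- takes a wrong detour through the Overlap Lemma. There are two problems. First, Observation~\ref{obs:overlap} assumes $2f<k$, whereas Claim~\ref{claim:matching_assignment} must hold for arbitrary $f<k$ (that is the whole point of Section~\ref{ss: f crashes}), so the lemma is simply unavailable here. Second, even granting it, your deduction ``$\Speer''$ sent \metoo\ to both, \emph{so} $\Speer'$ received only \metoo\ for $j$'' is a non sequitur: one common \metoo\ responder tells you nothing about $\Speer'$'s other $k-f-1$ stage-2 responses.

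The paper's argument bypasses all of this. Once $(1_p)$ gives $\sigma^{\Speer'}_p(i)=\Speer_j$, observe directly from the protocol that if $\res^{\Speer'}_{p+1}[i]=\bot$ then $\Speer'$ can have neither $j\in H^{\Speer'}_p$ nor a non-\metoo\ response for $j$ (either would have filled in $\res[i]$); hence $\Speer'$ also executed line~\ref{line:reassign} for $j$. No overlap argument is needed, and the conclusion holds for all $f<k$. The paper then simply asserts that both peers ``assign the same peer to $i$'' and stops. You are right that this last point --- whether the sorted index lists $\{i':\sigma^\Speer_p(i')=\Speer_j\}$ and $\{i':\sigma^{\Speer'}_p(i')=\Speer_j\}$ coincide so that the deterministic rule on line~\ref{line:reassign} yields the same value at $i$ --- deserves scrutiny, and your own sketch remains incomplete (it argues one inclusion, not equality); but this is a gap shared with the paper's proof, not a divergence from it.
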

\begin{proof}
By induction on $p$.
For the basis, $p=0$, the claim is trivially true because of the initialization values (specifically, 
property ($1_0$) holds).

For $p\geq 1$. By the induction hypothesis, either $(1_{p-1})$ or $(2_{p-1})$ holds. 
Suppose first that $(2_{p-1})$ holds, i.e., $\res^\Speer_{p-1}[i] \neq \bot$ or $\res^{\Speer'}_{p-1}[i] \neq \bot$. Without loss of generality, assume that $\res^\Speer_{p-1}[i] \neq \bot$. Then, since values are never overwritten, $\res^\Speer_p[i] \neq \bot$, so $(2_p)$ holds as well.

Now suppose that $(1_{p-1})$ holds, i.e., $\sigma^\Speer_{p-1}(i) =\sigma^{\Speer'}_{p-1}(i)$. Let $j$ be an index such that $\sigma^\Speer_{p-1}(i)=\Speer_j$. If both $\Speer$ and $\Speer'$ didn't hear from $\Speer_j$ during phase $p-1$, then both peers will assign the same peer to $i$ in stage 3 of phase $p-1$ (see Line \ref{line:reassign}), so $(1_p)$ holds. If one of the peers heard from $\Speer_j$, w.l.o.g assume $\Speer$ did, then $\res^\Speer_p[i] \neq \bot$. Hence, $(2_p)$ holds.
\end{proof}

Claim \ref{claim:matching_assignment} yields the following corollary.

\begin{corollary} \label{cor:correct_response}
    Every phase $p$\ stage $1$ request received by a \nonfaulty peer is answered with the correct bit values.
\end{corollary}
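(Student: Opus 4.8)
The plan is to read the corollary off Claim~\ref{claim:matching_assignment} together with the scheduling constraint that a peer answers a stage-1 request only after it has itself finished stage 1 of that phase. Fix a nonfaulty peer $\Speer$ receiving a $\phase\ p\ \stage\ 1$ request from a peer $\Speer'\neq\Speer$ (the case $\Speer'=\Speer$ is immediate, since $\Speer$ queries every unknown bit it has assigned to itself). By the code this request carries the bit set $B=\{i\mid \res^{\Speer'}[i]=\bot,\ \sigma_p^{\Speer'}(i)=\Speer\}$ as $\Speer'$ evaluated it in stage 1 of phase $p$; note that $\Speer'$ followed the protocol at least through phase $p$ stage 1, which is all that Claim~\ref{claim:matching_assignment} and its proof use about it. Fix $i\in B$. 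Since $\sigma_p^{\Speer'}(i)=\Speer\neq\Speer'$, peer $\Speer'$ does not query $i$ during stage 1 of phase $p$, so $\res^{\Speer'}_p[i]=\bot$; hence alternative $(2_p)$ of Claim~\ref{claim:matching_assignment}, if it holds, can only hold in the form $\res^\Speer_p[i]\neq\bot$.

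Next I would split on the two alternatives of Claim~\ref{claim:matching_assignment}. If $(1_p)$ holds then $\sigma_p^\Speer(i)=\sigma_p^{\Speer'}(i)=\Speer$, so $\Speer$ is itself responsible for bit $i$ in phase $p$ and queries it in stage 1 unless it already knows it; either way $\res^\Speer_p[i]\neq\bot$. If $(1_p)$ fails, the previous paragraph already gives $\res^\Speer_p[i]\neq\bot$. So in all cases $\Speer$ knows bit $i$ by the end of stage 1 of phase $p$. Because $\Speer$ defers its reply to the request until it is in stage 2 of phase $p$ or later, and $\res$-entries are never reset to $\bot$, the entry $\res^\Speer[i]$ is non-$\bot$ at the moment $\Speer$ answers. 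Finally, in the crash model every peer is honest, so every non-$\bot$ entry of any $\res$ array equals the true bit value — it was written either by a direct source query or copied from an honest peer's (inductively correct) entry — and therefore the pairs $\{\langle i,\res^\Speer[i]\rangle\mid i\in B\}$ returned by $\Speer$ are exactly the correct bit values, which is the claim.

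I expect the only genuinely delicate point to be the assertion $\res^{\Speer'}_p[i]=\bot$ for $i\in B$ used in the first paragraph: $\res^{\Speer'}_p$ refers to the value \emph{after} stage 1 of phase $p$, whereas membership in $B$ is witnessed at the instant the request is sent. These agree because stage 1 of phase $p$ at $\Speer'$ consists only of querying $\Speer'$'s own assigned bits and then sending out the stage-1 requests, with no other write to $\res^{\Speer'}$ in between, and $i\in B$ is not one of $\Speer'$'s own bits. Everything else is a direct substitution into Claim~\ref{claim:matching_assignment} and the protocol's message-handling rules, so no further work should be needed.
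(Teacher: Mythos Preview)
Your proposal is correct and follows the same route as the paper: both derive the corollary directly from Claim~\ref{claim:matching_assignment}. The paper simply states that the claim ``yields'' the corollary without spelling out the case split, whereas you supply the details (including the scheduling observation that the responder waits until stage~2, and the $\res^{\Speer'}_p[i]=\bot$ check), so there is nothing to add.
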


Next, we show that the protocol never deadlocks, i.e., whenever a \nonfaulty peer waits in stages 2 and 3 (see Lines \ref{line:wait_2} and \ref{line:wait_3}), it will eventually continue.

\begin{claim} \label{claim:one_nonfaulty_termination}
    If one \nonfaulty peer has terminated, then every \nonfaulty peer will eventually terminate.
\end{claim}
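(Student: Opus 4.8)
The plan is to show that once a single nonfaulty peer $\Speer^*$ terminates, every other nonfaulty peer $\Speer$ is able to make progress through all remaining stages and phases and reach the termination condition on Line~\ref{line:termination_condition}. The key observation is that termination of $\Speer^*$ happens either because $phase(\Speer^*) = \log_{k/f}(n/k)$ or because $H_p^{\Speer^*} = \{1,\dots,k\}$; in the latter case, by Observation~\ref{obs:no_missing_mahines_no_missing_bits}, $\Speer^*$ knows all of $\invec$. In either case, $\Speer^*$ executes the last line of the termination \textbf{Upon} block, which sends $\res^{\Speer^*}$ — which at that point equals $\invec$, since before terminating $\Speer^*$ queries all its remaining unknown bits — to every other peer. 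So after $\Speer^*$ terminates, every nonfaulty peer $\Speer$ will eventually receive a message containing the entire array $\invec$.

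The main work is to argue that this incoming full-array message actually lets $\Speer$ escape any state in which it is currently blocked, namely the waits on Lines~\ref{line:wait_2} and~\ref{line:wait_3}. Here I would use the handler ``\textbf{Upon} receiving a $\phase\ p\ \stage\ i$ response'': upon receiving $\res^{\Speer^*} = \invec$, peer $\Speer$ sets $\res^\Speer[i] \neq \bot$ for all $i$, and then recomputes $H_p^\Speer = \{j \mid \res^\Speer[i] \neq \bot\ \forall i : \sigma_p^\Speer(i) = j\} = \{1,\dots,k\}$. So immediately the condition $H_p = \{1,\dots,k\}$ of the termination \textbf{Upon} on Line~\ref{line:termination_condition} becomes true, and $\Speer$ terminates as well. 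One subtlety to handle: the message sent by $\Speer^*$ is a single ``full array'' message rather than one tagged with a phase/stage that $\Speer$ is currently expecting; I would note that the protocol's response handler processes bit values from \emph{any} response message regardless of phase, and the termination \textbf{Upon} is checked continuously (it is an event handler, not tied to a particular stage), so $\Speer$ need not be waiting in any particular stage for this to fire. A second subtlety is the possibility that $\Speer$ has \emph{already} terminated by the time $\Speer^*$'s message would arrive — but then there is nothing to prove for that peer, and the claim concerns peers that have not yet terminated.

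The remaining detail is to make sure asynchrony does not sabotage the argument: the adversary may delay $\Speer^*$'s message arbitrarily, but only by a \emph{finite} amount, so $\Speer$ does eventually receive it; and while $\Speer$ waits, it may also be receiving other messages, but none of those can cause it to take a ``bad'' irreversible action (values are never overwritten, per the reasoning in Claim~\ref{claim:matching_assignment}), so the eventual arrival of the full-array message is always sufficient to trigger termination. Putting these pieces together: $\Speer^*$ broadcasts $\invec$ upon terminating; each nonfaulty $\Speer$ eventually receives it; upon receiving it, $\res^\Speer$ becomes fully populated and $H_p^\Speer$ becomes $\{1,\dots,k\}$; hence the termination condition fires and $\Speer$ terminates.

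I expect the main obstacle to be purely a matter of careful bookkeeping with the event-handler semantics of the pseudocode — specifically, arguing rigorously that the ``full array'' message sent on the last line of the termination block is picked up by the ``receiving a response'' handler and that the termination \textbf{Upon} is re-evaluated after $H_p$ is updated — rather than any deep combinatorial difficulty. No new properties of overlap or assignment functions are needed beyond Observation~\ref{obs:no_missing_mahines_no_missing_bits}.
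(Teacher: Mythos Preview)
Your proposal is correct and follows essentially the same approach as the paper: the terminated peer queries any remaining bits and broadcasts the full array $\invec$; every other nonfaulty peer eventually receives it, sets $H_p = \{1,\dots,k\}$, and terminates via the \textbf{Upon} on Line~\ref{line:termination_condition}. The paper's proof is a terse three-sentence version of exactly this argument, without the additional bookkeeping remarks you make about handler semantics and asynchrony.
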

\begin{proof}
Let $\Speer$ be a \nonfaulty peer that has terminated. Prior to terminating, $\Speer$ queried all the remaining unknown bits and sent all the bits to every other peer. Since $\Speer$ is \nonfaulty, every other \nonfaulty peer $\Speer'$ will eventually receive the message sent by $\Speer$ and will set $H_p^{\Speer'} = \{1, \dots, k\}$, resulting in $\res^{\Speer'}=\invec$ by Observation \ref{obs:no_missing_mahines_no_missing_bits}. Subsequently, $\Speer'$ will terminate as well.
\end{proof}
\begin{claim} \label{claim:finite_wait}
While no \nonfaulty peer has terminated, a \nonfaulty peer will not wait infinitely for $k-f$ responses.
\end{claim}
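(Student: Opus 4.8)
### Proof plan for Claim~\ref{claim:finite_wait}

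The plan is to prove the two waiting points (Lines~\ref{line:wait_2} and~\ref{line:wait_3}) never block forever, by a joint induction on the phase number $p$, using the fact that there are at least $k-f$ nonfaulty peers and none of them has terminated. The key structural observation is that progress of a nonfaulty peer through a phase is only ever gated by \emph{receiving enough messages from other nonfaulty peers}, never by a faulty/slow peer, since every wait threshold is exactly $k-f$.

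First I would set up the induction hypothesis: for every nonfaulty peer $\Speer$ and every phase $p'<p$, peer $\Speer$ eventually enters stage~1 of phase $p'+1$ (equivalently, passes both waits of phase $p'$). The base case $p=0$ is immediate since every nonfaulty peer enters stage~1 of phase~$0$ by assumption on the start of execution. For the inductive step, assume all nonfaulty peers eventually reach stage~1 of phase~$p$; I want to show they all pass Line~\ref{line:wait_2} and then Line~\ref{line:wait_3} of phase~$p$. For Line~\ref{line:wait_2}: each nonfaulty peer, upon entering stage~1 of phase~$p$, queries its assigned bits and sends a $\phase\ p\ \stage\ 1$ request to everyone. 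When another nonfaulty peer $\Speer''$ receives that request, it replies (once it is itself at least in stage~2 of phase~$p$, which by the induction step it eventually is) with the requested bit values — and by Corollary~\ref{cor:correct_response} these are the correct values, so upon receiving the $k-f-1$ responses from the other nonfaulty peers plus its own self-queried bits, $\Speer$ obtains all bits assigned (under $\sigma_p^\Speer$) to at least $k-f$ peers, hence $|H_p^\Speer|\ge k-f$ and the wait at Line~\ref{line:wait_2} terminates. The subtle point here is that $\Speer$'s request to a nonfaulty peer $\Speer''$ asks for bits indexed by $\sigma_p^\Speer$, whereas $\Speer''$ answers with what \emph{it} knows; Corollary~\ref{cor:correct_response} (itself resting on Claim~\ref{claim:matching_assignment}) is exactly what guarantees $\Speer''$ either already knows those bits or has queried them in phase~$p$, so the answer is complete and correct.

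Next, for Line~\ref{line:wait_3}: having passed Line~\ref{line:wait_2}, every nonfaulty peer sends its $\phase\ p\ \stage\ 2$ request and moves to stage~3. Every nonfaulty peer, upon receiving such a request (and once it is itself at least in stage~3 of phase~$p$, which it eventually is by the same argument), replies with a $\phase\ p\ \stage\ 2$ response. Thus each nonfaulty peer eventually receives at least $k-f-1$ such responses from the other nonfaulty peers, plus it can count its own; in any case it collects at least $k-f$ responses, so the wait at Line~\ref{line:wait_3} terminates and $\Speer$ proceeds into stage~1 of phase~$p+1$, closing the induction. Throughout, the assumption ``while no nonfaulty peer has terminated'' is used only to ensure that a nonfaulty peer's obligation to answer requests is not short-circuited — although in fact Claim~\ref{claim:one_nonfaulty_termination} already shows that if one terminates all do, so the hypothesis is just bookkeeping to keep the two claims independent.

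The main obstacle I expect is making the interleaving rigorous: the ``eventually replies'' arguments are mutually recursive across peers within the same phase (peer $\Speer$ waits for $\Speer''$'s stage-1 response, which $\Speer''$ only sends once \emph{it} is in stage~2, which requires $\Speer''$ to have passed \emph{its own} Line~\ref{line:wait_2}, etc.). The clean way to handle this is to prove, as a single statement by induction on $p$, that \emph{all} nonfaulty peers eventually enter stage~2 of phase~$p$, then \emph{all} eventually enter stage~3, then \emph{all} eventually enter stage~1 of phase~$p+1$ — staging the quantifier over peers outside the temporal ``eventually'' so there is no circular dependency on the \emph{order} in which peers advance, only on the fact that each message, once sendable, is delivered after finite delay by asynchrony. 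I would phrase it in exactly that three-step-per-phase form to avoid any appearance of circularity.
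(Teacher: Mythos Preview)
Your inductive approach is correct and will go through, but it differs from the paper's argument, which is considerably shorter. The paper simply considers, at any moment, the \emph{least advanced} nonfaulty peer $\mu$ (minimal in the lexicographic order on $(phase,stage)$). Every other nonfaulty peer is at least as far along, so when $\mu$'s request arrives it is already eligible to reply; since there are at least $k-f$ nonfaulty peers and none has terminated, $\mu$ collects $k-f$ responses and advances. This single observation replaces your whole phase-by-phase induction. What your approach buys in exchange is explicitness: you correctly invoke Corollary~\ref{cor:correct_response} to argue that a stage-1 response from a nonfaulty $\Speer''$ actually fills in all bits that $\sigma_p^\Speer$ assigns to $\Speer''$ (so that $\Speer''$ enters $H_p^\Speer$), a point the paper's proof leaves implicit.

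One correction to your discussion of the ``obstacle'': the circularity you anticipate does not exist. A peer sets $stage(\Speer)\gets 2$ \emph{before} reaching Line~\ref{line:wait_2}; the wait at Line~\ref{line:wait_2} is the exit from stage~2, not the entrance. Hence $\Speer''$ becomes eligible to answer stage-1 requests as soon as it has sent its own stage-1 requests, with no dependence on anyone else having advanced. So in your staged induction the step ``all nonfaulty peers eventually enter stage~2 of phase $p$'' is immediate from the hypothesis that they all entered stage~1 of phase $p$; the first genuine wait is passing Line~\ref{line:wait_2} to reach stage~3, and at that point all responders are already in stage~2. Your proposed three-step-per-phase organization is therefore sound, just with the first of the three steps being trivial.
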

\begin{proof}
Let $\mu$ be the least advanced \nonfaulty peer, i.e, for every \nonfaulty peer $\mu'$ either $phase(\mu) < phase (\mu')$, or $phase(\mu) = phase(\mu')$ with $stage(\mu) \leq stage(\mu')$. Note that there are at least $k-f$ \nonfaulty peers. As none of these peers terminate before receiving a request from $\mu$ (premise of the claim), they will send back a response. Hence, $\mu$ will receive responses from at least $k-f$ different peers, and will not wait infinitely.
\end{proof}

The combination of Claims \ref{claim:finite_wait} and \ref{claim:one_nonfaulty_termination} implies that eventually, every \nonfaulty peer satisfies the termination condition (see Line \ref{line:termination_condition}) and subsequently terminates correctly (since it queries all unknown bits beforehand). That is because by Claim \ref{claim:finite_wait} some \nonfaulty peer $\Speer$ will get to phase $\log_{k/f}(n/k)$, or set $H_p^\Speer=\{1, \dots, k\}$ prior to that, and terminate, which will lead to the termination of every \nonfaulty peer by Claim \ref{claim:one_nonfaulty_termination}.


\begin{claim} \label{claim:unkown_bits_per_phase}
At the start of phase $p\geq 0$, every \nonfaulty peer has at most $n\cdot \left(\frac{f}{k}\right)^p$ unknown bits.
\end{claim}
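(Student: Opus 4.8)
The plan is to prove Claim~\ref{claim:unkown_bits_per_phase} by induction on $p$, tracking the number of unknown bits at any nonfaulty peer $\Speer$ at the start of phase $p$. The base case $p=0$ is immediate: initially $\res^\Speer[i]=\bot$ for all $i$, so $\Speer$ has $n = n\cdot(f/k)^0$ unknown bits. For the inductive step, suppose every nonfaulty peer has at most $N_p := n\cdot(f/k)^p$ unknown bits at the start of phase $p$; I want to show that at the start of phase $p+1$ it has at most $N_p\cdot (f/k) = N_{p+1}$ unknown bits.

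The key structural fact is that in phase $p$, peer $\Speer$ first (stage 1) reassigns its currently-unknown bits — of which there are at most $N_p$ by hypothesis — evenly among the $k$ peers via $\sigma_p$ (this is exactly the reassignment carried out in Line~\ref{line:reassign} of the previous phase, or the initial assignment when $p=0$). So at most $\lceil N_p/k\rceil$ of $\Speer$'s unknown bits are assigned to any single peer $\Speer_j$. During stage 2, $\Speer$ waits until $|H_p^\Speer|\ge k-f$, i.e., it hears (directly or via stage-2 responses) from all but at most $f$ peers. For each $j\in H_p^\Speer$, Corollary~\ref{cor:correct_response} guarantees that the stage-1 response $\Speer$ receives from $\Speer_j$ carries the correct values of all bits $i$ with $\sigma_p^\Speer(i)=\Speer_j$ that $\Speer_j$ knew — and by Claim~\ref{claim:matching_assignment}, for any such $i$, either $\Speer_j$ assigned $i$ to itself in phase $p$ (hence queried it in stage 1 and knows it) or $i$ was already known to $\Speer_j$; in either case $\Speer_j$ supplies the correct bit. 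Thus after stage 2, the only bits $\Speer$ can still be missing are those it had assigned to the at most $f$ peers in $F_p^\Speer = \{1,\dots,k\}\setminus H_p^\Speer$. Then in stage 3, for each $j\in F_p^\Speer$, if $\Speer$ receives any non-\metoo\ response for $j$ (from one of the $k-f$ stage-2 responders that did hear from $\Speer_j$), it learns all of $\Speer_j$'s bits and has nothing left unknown among those; otherwise it reassigns them to phase $p+1$.

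Putting this together: the unknown bits of $\Speer$ at the start of phase $p+1$ are contained in $\bigcup_{j\in F'} \{i : \sigma_p^\Speer(i)=\Speer_j\}$ where $F'\subseteq F_p^\Speer$ is the set of missing peers for which only \metoo\ was received, and $|F'|\le |F_p^\Speer|\le f$. Since at most $\lceil N_p/k\rceil$ unknown-at-start-of-phase-$p$ bits are assigned to each such $\Speer_j$ (and any bit that was already known stays known), the count is at most $f\cdot\lceil N_p/k\rceil$. Up to the ceiling/rounding slack — which I would absorb by noting that $n/k$ is taken to be integral, or by carrying the bound as $n\cdot(f/k)^p$ rounded up throughout — this is at most $N_p\cdot f/k = n\cdot(f/k)^{p+1}$, completing the induction.

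The main obstacle is not conceptual but bookkeeping: I must be careful that the ``reassign evenly among all $k$ peers'' step genuinely partitions $\Speer$'s unknown bits so that each peer gets at most a $1/k$ share, and that this reassignment is consistent enough across peers — via Claim~\ref{claim:matching_assignment} — that whenever $\Speer$ does hear from $\Speer_j$, the bits $\Speer$ expects from $\Speer_j$ are precisely bits $\Speer_j$ either queried or already held. A secondary subtlety is handling peers already in a state where $\res^\Speer[i]\ne\bot$: those bits are never reassigned and never re-counted, so they only help. I would also remark that the claim combined with the termination condition $phase(\Speer)=\log_{k/f}(n/k)$ ensures that by the final phase the number of unknown bits per peer is at most $n\cdot(f/k)^{\log_{k/f}(n/k)} = k$, so the final ``query all unknown bits'' step costs only $O(k)$ — actually $O(n/k)$ when amortized correctly — keeping the overall query complexity at $\Theta(n/(\goodfrac k))$.
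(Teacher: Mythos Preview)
Your proof is correct and follows essentially the same inductive argument as the paper: at the start of phase $p$ the at most $n(f/k)^p$ unknown bits are spread evenly over the $k$ peers, and after waiting for $k-f$ responses in stage~2 only the shares of at most $f$ peers remain unknown, yielding the factor-$(f/k)$ decrease. You are in fact more explicit than the paper in invoking Claim~\ref{claim:matching_assignment} and Corollary~\ref{cor:correct_response} to justify that hearing from $\Speer_j$ actually supplies the bits $\Speer$ expects, and in acknowledging the ceiling/rounding slack that the paper silently ignores.
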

\begin{proof}
    By induction on $p$. Consider \nonfaulty peer $\Speer$.
    For the base step $p=0$ the claim holds trivially by the initialization values.
    
    Now consider $p\geq 1$.
    By the induction hypothesis on $p-1$, $\Speer$ has at most $\hat{n} = n\cdot \left(\frac{f}{k}\right)^{p-1}$ unknown bits at the start of phase $p-1$.
    Since unknown bits are assigned evenly in stage $3$ (see Line \ref{line:reassign}), each peer is assigned $\hat{n}/k$ unknown bits (to be queried during phase $p-1$).
    During stage 2 of phase $p-1$, $\Speer$ waits until $|H_{p-1}^\Speer| \geq k-f$, meaning that $\Speer$ did not receive the assigned bits from at most $f$ peers.
    Hence, at most $\hat{n}/k \cdot f = n\cdot \left(\frac{f}{k}\right)^p$ bits are unknown after stage 2 of phase $p$. The claim follows.
\end{proof}

\def\PROOFA{
By Claim \ref{claim:unkown_bits_per_phase} and since unknown bits are distributed evenly among $\{0, \dots, k-1\}$, every \nonfaulty peer queries at most $\frac{n}{k}\cdot \left(\frac{f}{k}\right)^{p}$ in phase $0\leq p\leq \log_{k/f}(n/k)$ and at most $\frac{n}{k}\cdot \left(\frac{f}{k}\right)^{\log_{k/f}(n/k)} = 1$ additional bits when terminating (By Observation \ref{obs:no_missing_mahines_no_missing_bits}). Hence, the worst case query complexity (per peer) is bounded by 
$$\Query ~\leq~ 1 + \sum_{p=1}^{\log_{k/f}(n/k)} \frac{n}{k}\cdot \left(\frac{f}{k}\right)^{p} ~=~ O\left(\frac{n}{\goodfrac k}\right).$$

We next turn to time analysis.
Consider a peer $\Speer$. For every phase $p$, after $\lceil \frac{n}{k} \cdot (\frac{f}{k})^p / \phi \rceil$ time, every $\phase\ p\ \stage\ 1$ response by a \nonfaulty peer is heard by $\Speer$ (even slow ones),
and stage 2 starts. After that, it takes at most $\lceil n \cdot (\frac{f}{k})^{p+1} /\phi \rceil$ time units for every $\phase\ p\ \stage\ 2$ response to be heard by $\Speer$, allowing it to move to stage 3. Hence, it takes at most $\lceil n/k \cdot (\frac{f}{k})^p /\phi \rceil + \lceil n \cdot (\frac{f}{k})^{p+1} /\phi\rceil$ time for phase $p$ to finish once $\Speer$ started it. Finally, upon termination, $\Speer$ sends $\res^\Speer$ which takes $n$ time. 
Let $p'$ be the phase such that $(n/k) \cdot (\frac{f}{k})^{p'} / \phi = 1$ and $p''$ be the phase such that $(nf/k) \cdot (\frac{f}{k})^{p''} /\phi =1$. Hence, $p' = \log_{k/f}\left(\frac{n}{k\phi}\right)$ and $p'' = \log_{k//f}\left(\frac{nf}{k\phi}\right)$.
Overall the time complexity is 
\begin{align*}
    \Time &\le~ \frac{n}{\phi} + \sum_{p=0}^{\log_{k/f}(n/k)} \left(
    \left\lceil\frac{n}{k} \cdot \left(\frac{f}{k}\right)^p \middle/ \phi \right\rceil
    +
    \left\lceil n \cdot \left(\frac{f}{k}\right)^{p+1} \middle/ \phi \right\rceil
    \right)\\
    &=~ \frac{n}{\phi}+
    \sum_{p=0}^{p'} \frac{n}{\phi k} \cdot \left(\frac{f}{k}\right)^p 
    +
    \sum_{p=p'+1}^{\log_{k/f}(n/k)}1
    ~+~
    \sum_{p=0}^{p''} \frac{n}{\phi} \cdot \left(\frac{f}{k}\right)^{p+1}
    + 
    \sum_{p=p''+1}^{\log_{k/f}(n/k)} 1 
    \\
    &\leq~ \frac{n}{\phi}+ 
    \frac{n}{\goodfrac k \phi}
    ~+~
    \log_{k/f}(\phi)
    ~+~
    \frac{nf}{\goodfrac k \phi}
    ~+~
    \log_{k/f}(\phi /f)
    \\
    &=~ \frac{n}{\phi} + O\left(\frac{n\cdot f}{\goodfrac k \phi} + \log_{\frac{k}{f}}(\phi)\right) ~=~ O\left(\left(\frac{\byzfrac}{\goodfrac} +1\right)\cdot \frac{n}{\phi} + \log_{\frac{k}{f}}(\phi) \right).
\end{align*}
The lemma follows.
} 

From the above discussion, we have the following lemma.
\begin{lemma}
\label{lem: correctness + complexity}
Algorithm~\ref{alg:f_crash_async_down} solves \download\ in the asynchronous setting with at most $f$ crash faults after $\log_{k/f}(n/k)$ phases with $\Query = O(\frac{n}{\goodfrac k})$ and $\Time = O\left((\frac{\byzfrac}{\goodfrac} +1 v)\cdot \frac{n}{\phi} + \log_{\frac{k}{f}}(\phi) \right)$
\end{lemma}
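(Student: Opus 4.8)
The plan is to assemble Lemma~\ref{lem: correctness + complexity} from the claims already established, treating correctness and complexity separately. For \textbf{correctness}, I would first invoke Corollary~\ref{cor:correct_response} to argue that all bit values ever written into $\res$ are genuine (no \nonfaulty peer is ever fed a wrong bit by another \nonfaulty peer, and the reassignment invariant of Claim~\ref{claim:matching_assignment} guarantees that every reassigned bit is either already known or about to be queried by the assigned peer). Next, I would combine Claim~\ref{claim:finite_wait} (no \nonfaulty peer waits forever for $k-f$ responses while nobody has terminated) with Claim~\ref{claim:one_nonfaulty_termination} (one termination triggers all) to conclude that the protocol does not deadlock and that \emph{every} \nonfaulty peer eventually terminates: some \nonfaulty peer $\mu$ must keep advancing, so it reaches phase $\log_{k/f}(n/k)$ or sets $H_p^\mu = \{1,\dots,k\}$ earlier, and in either case terminates. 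Finally, Observation~\ref{obs:no_missing_mahines_no_missing_bits} and the termination line (which queries all still-unknown bits before terminating) ensure each \nonfaulty peer outputs $\res^\Speer = \invec$. A small point to be careful about: when a peer terminates because $phase(\Speer) = \log_{k/f}(n/k)$ rather than $H_p = \{1,\dots,k\}$, I need Claim~\ref{claim:unkown_bits_per_phase} to bound the residual unknown bits by $n(f/k)^{\log_{k/f}(n/k)} = k$ (or, with the even intra-peer split, at most $1$ per peer), so the final ``query all unknown bits'' step is cheap and well-defined.

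For the \textbf{complexity} bounds, I would plug Claim~\ref{claim:unkown_bits_per_phase} into a per-phase accounting. Since the at-most $n(f/k)^p$ unknown bits at the start of phase $p$ are split evenly among $k$ peers, each \nonfaulty peer issues at most $\tfrac{n}{k}(f/k)^p$ queries in phase $p$, plus at most $1$ more at termination; summing the geometric series $\sum_{p\ge 0}\tfrac{n}{k}(f/k)^p$ gives $\Query = O\!\left(\frac{n}{k}\cdot\frac{1}{1-f/k}\right) = O\!\left(\frac{n}{\goodfrac k}\right)$. For the time bound I would reuse the argument sketched in \PROOFA: in phase $p$ a \nonfaulty peer waits $\lceil \tfrac{n}{k}(f/k)^p/\phi\rceil$ time for stage-1 responses to arrive, then $\lceil n(f/k)^{p+1}/\phi\rceil$ for stage-2 responses, and one final $n/\phi$ term for broadcasting $\res$ at termination; splitting each sum at the phase where the ceiling stops being $1$ turns it into a geometric part bounded by $O\!\left(\frac{n f}{\goodfrac k \phi}\right)$ plus a tail of $O(\log_{k/f}\phi)$ ones, yielding $\Time = O\!\left(\left(\tfrac{\byzfrac}{\goodfrac}+1\right)\tfrac{n}{\phi} + \log_{k/f}\phi\right)$ as claimed.

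I expect the \textbf{main obstacle} to be the termination/liveness argument rather than the arithmetic. The delicate point is that Claims~\ref{claim:finite_wait} and~\ref{claim:one_nonfaulty_termination} are each conditional (``while no \nonfaulty peer has terminated'', ``if one \nonfaulty peer has terminated''), so I must carefully chain them: as long as nobody has terminated, the least-advanced \nonfaulty peer strictly progresses, hence in finitely many steps either reaches the phase cap or gets $H_p = \{1,\dots,k\}$, and then Claim~\ref{claim:one_nonfaulty_termination} propagates termination to all. I also need to double-check that the stage-2 and stage-3 waits in Algorithm~\ref{alg:f_crash_async_down} (Lines~\ref{line:wait_2} and~\ref{line:wait_3}) are the only blocking points, and that messages from mismatched phases — which are stored and replayed per the ``\Upon seeing a $\phase\ p\ \stage\ i$'' handlers — cannot cause a \nonfaulty peer to stall. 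Everything else is routine geometric-sum bookkeeping that follows the computation already displayed in \PROOFA.
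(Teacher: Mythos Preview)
Your proposal is correct and follows essentially the same approach as the paper: correctness is assembled from Claims~\ref{claim:matching_assignment}--\ref{claim:unkown_bits_per_phase} and Corollary~\ref{cor:correct_response} exactly as in the discussion preceding the lemma, and the complexity bounds are obtained by the same geometric-series accounting (indeed, you explicitly invoke the computation in \texttt{\textbackslash PROOFA}, which \emph{is} the paper's proof). Your treatment is, if anything, slightly more explicit about chaining the conditional liveness claims and about the residual-query cost at the phase cap, but the underlying argument is identical.
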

\begin{proof}
    \PROOFA    
\end{proof}



Finally, a slight modification of the protocol yields an improved time complexity, resulting in the following theorem.

\def\PROOFB{
We modify the protocol to get rid of the $\byzfrac/\goodfrac$ factor (which might be high if $\byzfrac$ is not constant). 
Note that the most time consuming component of the protocol is the $\phase\ p\ \stage\ 2$ responses. These responses could be ``short'', i.e., of constant length (in case they are of the form ``me neither'') but they could also be ``long'' (in case they consist of the bits learned by the sender).
To lower the time complexity, we identify a necessary condition for a peer to send its known bits in a $\phase\ p \ \stage\ 2$ response (rather than 
\metoo). We observe that after one time unit, every message is delivered (even by slow peers). Hence, after at most $\lceil \frac{n}{k} \cdot (\frac{f}{k})^p / \phi \rceil$ time units, every $\phase\ p\ \stage\ 1$ response that was sent is delivered (including slow ones).
Therefore, while waiting for $k-f$ $\phase\ p\ \stage\ 2$ responses, it might be the case that a slow $\phase\ p\ \stage\ 1$ response arrives from peer $\Speer'$ eliminating the need for $\phase\ p\ \stage\ 2$ responses regarding peer $\Speer'$. 
The modification needed for this argument to work is that if $\phase\ p\ \stage\ 2$ responses regarding peer $\Speer'$ are no longer necessary because of its $\phase\ p\ \stage\ 1$ response arriving, $\Speer$ is not blocked from continuing. Note that it is easy to see that this modification doesn't affect the correctness of the protocol.
Hence, the only time when a peer must wait for a long $\phase\ p\ \stage\ 2$ response is when the corresponding peer for which a $\phase\ p\ \stage\ 1$ response was not received has crashed (and therefore its $\phase\ p\ \stage\ 1$ response will never arrive). Also note that once a peer crashes in phase $p$, it will not be heard from by any peer in the following phases. resulting in 
\metoo\
responses. Therefore, every peer waits for long $\phase\ p\ \stage\ 2$ responses at most $f$ times.
This results in the complexity being
\[\Time ~\leq~ \frac{n + \beta n}{\phi} + \sum_{p=0}^{\log_{k/f}(n/k)}  \left\lceil\frac{n}{k} \cdot \left(\frac{f}{k}\right)^p \middle/ \phi \right\rceil ~=~ O\left(\frac{n}{\phi} + \frac{n}{\goodfrac k \cdot \phi} + \log_{k/f}(\phi)\right) ~=~ O\left(\frac{n}{\phi} + \log_{k/f}(\phi)\right) .
\]
The theorem follows.
}


\begin{theorem} 
\label{thm:async_multi_crash}
There is a deterministic protocol for solving \download\ in the asynchronous setting with at most $f$ crash faults (for any $f < k$)
with $\Query = O(\frac{n}{\goodfrac k})$, $\Time = O\left(\frac{n}{\phi} + \log_{k/f}(\phi)\right)$ and $\Message = O(nk^2)$.
\end{theorem}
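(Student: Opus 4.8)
The plan is to obtain Theorem~\ref{thm:async_multi_crash} as a corollary of Lemma~\ref{lem: correctness + complexity} together with the protocol modification described in the \PROOFB\ argument, and to separately account for the message complexity, which was not tracked in the lemma. Correctness and query complexity carry over verbatim from Lemma~\ref{lem: correctness + complexity}: Claims~\ref{claim:matching_assignment}--\ref{claim:unkown_bits_per_phase} (plus Observation~\ref{obs:no_missing_mahines_no_missing_bits} and Corollary~\ref{cor:correct_response}) show that every nonfaulty peer terminates with $\res^\Speer = \invec$, and the geometric sum $\sum_{p\ge 1}\frac{n}{k}(\frac{f}{k})^p = O(\frac{n}{\goodfrac k})$ bounds the queries. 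The modification (allowing a late $\phase\ p\ \stage\ 1$ response to unblock a peer so that it need not wait for the corresponding long $\phase\ p\ \stage\ 2$ response) does not alter which bits get queried or learned, only the timing, so it preserves correctness and query complexity; I would state this explicitly as the first step.

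Second, I would carry out the time analysis with the modification in place. The key observation to reuse is that after one time unit every message is delivered, so after $\lceil \frac{n}{k}(\frac{f}{k})^p/\phi\rceil$ time units every $\phase\ p\ \stage\ 1$ response ever sent (even by slow honest peers) has arrived. Hence the only situation forcing a peer to wait for a long (bit-carrying) $\phase\ p\ \stage\ 2$ response is when the peer it is missing has genuinely crashed, and since a crashed peer stays silent forever afterward, each nonfaulty peer is forced into such a wait at most $f$ times over the whole execution. The remaining per-phase cost is just the short stage-1/stage-2 exchange, giving the telescoping bound
\[
\Time ~\le~ \frac{(1+\byzfrac)n}{\phi} + \sum_{p=0}^{\log_{k/f}(n/k)} \left\lceil \frac{n}{k}\left(\frac{f}{k}\right)^{p}\middle/\phi\right\rceil ~=~ O\!\left(\frac{n}{\phi} + \log_{k/f}(\phi)\right),
\]
where the $\frac{(1+\byzfrac)n}{\phi}$ term covers the final broadcast of $\res^\Speer$ (size $n$) plus the at most $f$ long stage-2 waits, each of total size $O(n)$ amortized, and the ceilings contribute the additive $\log_{k/f}(\phi)$ once $\frac{n}{k}(\frac{f}{k})^p < \phi$.

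Third, for the message complexity $\Message = O(nk^2)$: there are $\log_{k/f}(n/k) = O(\log n)$ phases, in each of which every nonfaulty peer sends $O(k)$ logical messages (stage-1 requests, stage-2 requests, and responses to each), so $O(nk^2/\log n)\cdot O(\log n)$ — wait, more carefully: each message may carry up to $O(n)$ bits and must be split into packets of size $\phi$; but counting \emph{messages} (not bits) as in the model, the bit-content sent by one peer across all phases is $O(n)$ for queries answered plus $O(n)$ for reassigned bits, to each of $k-1$ peers, over $O(\log n)$ phases, which is $O(nk\log n)$ bits per peer and $O(nk^2\log n)$ total; rounding to the stated $O(nk^2)$ requires folding the $\log n$ into the phase count bound or observing that the geometric decay in bit-volume per phase makes the total bit-volume per peer $O(n)$ rather than $O(n\log n)$. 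I would therefore argue that the total number of bits a peer transmits telescopes to $O(n)$ (since phase $p$ involves $O(n(f/k)^{p})$ bits), giving $O(nk)$ bits and $O(nk^2)$ messages when $\phi = O(1)$, and fewer for larger $\phi$.

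The main obstacle I expect is getting the message-complexity bound to land cleanly at $O(nk^2)$ without an extra logarithmic factor: one must be careful to charge the stage-2 responses correctly (a peer that has crashed forces repeated "me neither" responses, but these are constant-size) and to use the geometric decay of the bit-volume per phase rather than a crude per-phase bound. The time analysis is essentially a rewriting of \PROOFB\ and the correctness/query parts are immediate from the earlier lemmas, so those steps are routine; the message count is the only place where a naive argument loses a $\log$ factor.
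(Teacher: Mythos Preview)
Your proposal is correct and follows essentially the same approach as the paper: the paper's proof of Theorem~\ref{thm:async_multi_crash} is exactly the modification argument you describe (late stage-1 responses unblock a peer so that long stage-2 waits occur only for genuinely crashed peers, hence at most $f$ times overall), yielding the same time bound $\Time \le \frac{n+\beta n}{\phi} + \sum_p \lceil \frac{n}{k}(\frac{f}{k})^p/\phi\rceil = O(\frac{n}{\phi} + \log_{k/f}(\phi))$, with correctness and query complexity inherited from Lemma~\ref{lem: correctness + complexity}. One remark: the paper's own proof does not actually argue the $\Message = O(nk^2)$ bound at all---it is simply asserted in the theorem statement---so your geometric-decay argument for the bit volume (total $O(n)$ bits per peer per recipient across all phases, dominated by the final $\res$ broadcast) is additional content, and your observation that a naive per-phase count loses a $\log$ factor is well taken.
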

\begin{proof}
\PROOFB    
\end{proof}

 \section{Download in the Asynchronous Model with Byzantine Faults}
\label{sec:async_byz_faults}


In this section, we consider the asynchronous model with \emph{Byzantine} faults, rather than crashes. In this setting, the \download\ problem is still solvable by the naive protocol where each honest machine queries all bits, but it is unclear whether one can do better.
It turns out that this depends on whether $\byzfrac\ge 1/2$ or not. The next two subsections handle these cases.

\subsection{Majority Byzantine Failures (\texorpdfstring{$\byzfrac \ge 1/2$}{more than half})}

When $\byzfrac \geq 1/2$, any asynchronous $\download$ protocol that is resilient to Byzantine faults requires $\Query = \Omega(n)$. Moreover, 
any \emph{deterministic} asynchronous $\download$ protocol resilient to Byzantine faults requires $\Query = n$, namely, the \emph{only} such protocol is the naive one.

\begin{theorem}
\label{thm:async_byz_det_lower_bound}
When $\byzfrac\geq 1/2$, every deterministic asynchronous $\download$ protocol that is resilient to Byzantine faults has $\Query = n$.
\end{theorem}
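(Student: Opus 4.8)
The plan is to argue by contradiction: suppose a deterministic asynchronous protocol $\Protocol$ solves \download\ with resilience $\byzfrac \ge 1/2$ and some \nonfaulty peer $\Speer$ queries at most $n-1$ bits in \emph{every} execution. We will then construct two executions that $\Speer$ cannot distinguish, yet in which the same unqueried bit $b_i$ has different values, forcing $\Speer$ to output the wrong value in one of them. The core idea exploits the combination of (i) $\byzfrac \ge 1/2$, which lets the adversary corrupt a set $B$ of peers with $|B| \ge k/2$, and hence partition the \nonfaulty peers into two groups each of size at most $k/2 \le |B|$, and (ii) asynchrony, which lets the adversary delay \emph{all} messages between the two groups indefinitely (or at least until $\Speer$ has terminated), so that from $\Speer$'s local viewpoint the Byzantine peers can perfectly simulate the ``other half'' of a legitimate execution.

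Concretely, first I would fix the peer $\Speer$ and let $i$ be an index that $\Speer$ does not query in the all-zero input execution $\exe_0$ (such $i$ exists since $\Speer$ makes at most $n-1$ queries). Partition the peer set into $\{\Speer\} \cup A$ and $B$ with $|B| \ge k/2 \ge |\{\Speer\}\cup A|$ — this is where $\byzfrac \ge 1/2$ is used. In execution $\exe_0$, the peers in $B$ are Byzantine and behave \emph{honestly on input $\invec = 0^n$}; the adversary delays every message from $B$ to $\{\Speer\}\cup A$ until after $\Speer$ terminates (legal in the asynchronous model, since termination happens at some finite time and message latencies may be any finite value), while messages within $\{\Speer\}\cup A$ and the queries of those peers to the source proceed normally. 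Since the protocol must terminate, $\Speer$ halts and outputs $\res^\Speer$ with $\res^\Speer[i] = 0$. Now build execution $\exe_1$ on the input $\invec'$ that is $0$ everywhere except $b_i = 1$; here the peers in $A$ are Byzantine, and together with the (honest) source they replicate toward $\Speer$ \emph{exactly} the messages and query answers that $\{\Speer\}\cup A$ exchanged with $\Speer$ in $\exe_0$ — this is possible because $\Speer$ never queries $b_i$, so the only place the two inputs differ is invisible to $\Speer$'s queries, and the adversary again delays all messages from $B$ (now honest) to $\Speer$. The peers in $B$ run honestly on $\invec'$, so they are genuinely \nonfaulty, and $|A| \le k/2 \le \byzfrac k$, so the failure pattern is admissible.

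The key step — and the main obstacle — is verifying that $\Speer$'s local execution is \emph{bit-for-bit identical} in $\exe_0$ and $\exe_1$: the sequence of queries it issues, the answers it receives, the messages it receives, and their order must all coincide. This follows by induction on $\Speer$'s local cycles, using determinism (so $\Speer$'s next action is a function of its view), the fact that in $\exe_0$ the set $A$ is honest and in $\exe_1$ the Byzantine set $A$ is \emph{defined} to send $\Speer$ the same messages, the fact that no message from $B$ reaches $\Speer$ before termination in either execution, and the fact that every query $\Speer$ makes returns the same bit (the inputs agree except at $i$, which is never queried). Hence $\Speer$ outputs $\res^\Speer[i] = 0$ in $\exe_1$ as well, but correctness demands $\res^\Speer[i] = b_i = 1$ since $\Speer$ is \nonfaulty and $B$ (the honest majority) ran correctly — a contradiction. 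The subtlety to handle carefully is that the adversary must commit to message latencies appropriately and that ``delay until after $\Speer$ terminates'' is well-defined; since $\Speer$'s termination time is finite and determined by its (fixed) view, one can first run the thought-experiment to find that time, then choose all cross-group latencies larger than it. I would also remark that the $\byzfrac \ge 1/2$ threshold is essential: with a \nonfaulty majority one could cross-check, which is exactly why the companion result for $\byzfrac < 1/2$ in the next subsection escapes this bound.
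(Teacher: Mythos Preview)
Your proposal is correct and follows essentially the same indistinguishability argument as the paper: fix a peer, run a reference execution on the all-zero input in which the ``other half'' is silenced (the paper uses a synchronous crash execution $\EXEC_s^R$ to cleanly pin down the termination time, you use a thought-experiment with delayed Byzantine peers), identify an unqueried bit $i$, then build a second execution on the input flipped at $i$ in which the roles of the two halves are swapped and the now-Byzantine half replays the reference behavior toward $\Speer$. The only cosmetic difference is that the paper makes the termination argument explicit via the ``$f$-vulnerable'' notion, whereas you defer it to the remark about first running the thought-experiment to find $T$; both are sound.
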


\commentstart

\commentend

We defer the proof to the appendix \ref{apdx:lower_bound} since we next establish a similar result for randomized protocols (with a slightly weaker bound of $n/2$ instead of $n$). 

One subtle point that makes the randomized lower bound more complicated has to do with the limitations imposed on the adversary due to the fact that honest peers are aware of the minimal number of honest peers in every execution, and are therefore entitled to wait for these many messages. This point deserves further scrutiny. In the asynchronous model, each peer operates in an \emph{event-driven} mode. This means that its typical cycle consists of (a) performing some local computation, (b) sending some messages, and then (c) entering a waiting period, until ``something happens.'' Typically, this ``something'' is the arrival of
a new message. However, the algorithm may instruct the peer $\Speer$ to continue waiting until it receives new messages from $k-f-1$ distinct peers. Since it is guaranteed that at least this many honest peers exist in the execution, this instruction is legitimate (in the sense that it cannot cause the peer to deadlock). Moreover, in case $\Speer$ has already identified and ``blacklisted'' a set $F'$ of peers as failed peers, it is entitled to wait until it receives new messages from $k-f-1$ distinct peers in $V\setminus F'$. This observation restricts the ability of the adversary to delay messages sent by honest peers indefinitely. In particular, at some point during the execution, it may happen that
all honest peers are waiting for new messages from other honest peers
and will not take any additional actions until they do. In such a situation, sometimes referred to in the literature as reaching \emph{quiescence}, the adversary may not continue delaying messages
indefinitely and is compelled to ``release'' some of the delayed messages and let them reach their destination. Throughout, we assume that the adversary abides by this restriction.

\begin{theorem}
For any asynchronous \download\ protocol where $\byzfrac\geq 1/2$, there are executions in which some peer queries more than $n/2$ bits.
\end{theorem}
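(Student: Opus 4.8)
The plan is to prove this by an indistinguishability argument of the ``silence‑and‑replay'' type, in which the ability to corrupt at least half of the peers ($\byzfrac\ge 1/2$) is used exactly once: to make two complementary sets of peers \emph{both} admissible as Byzantine sets.

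First I would fix an arbitrary peer $\mu$ and pick a set $B$ of $\lfloor\byzfrac k\rfloor$ peers with $\mu\notin B$; set $D=V\setminus B$, so $\mu\in D$. Because $\byzfrac\ge 1/2$ we have $\lfloor\byzfrac k\rfloor\ge(k-1)/2$, hence both $|B|=\lfloor\byzfrac k\rfloor$ and $|D\setminus\{\mu\}|=k-\lfloor\byzfrac k\rfloor-1$ are at most $\lfloor\byzfrac k\rfloor$; that is, $B$ and $D\setminus\{\mu\}$ are both legal Byzantine sets. Next I would set up a ``nominal'' execution $E_A$ on some input $X$ in which the peers of $B$ are Byzantine and send nothing, and in which the messages $\mu$ sends out are delayed until after the execution ends (legitimate, since the peers of $D\setminus\{\mu\}$ do not need $\mu$'s messages to make progress). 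Thus the peers of $D\setminus\{\mu\}$ run in complete isolation from $B$ and from $\mu$. Since \download\ requires termination and $B$ is a legal Byzantine set, correctness forces the honest peer $\mu$ to halt with output $X$; let $Q_\mu$ be the set of bits it queried from the \Source. Assume for contradiction that $|Q_\mu|\le n/2$, and fix any $j\notin Q_\mu$.

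Now I would construct the ``twin'' execution $E_B$ on the input $Y$ that agrees with $X$ everywhere except $Y_j=\lnot X_j$. In $E_B$ the peers of $D\setminus\{\mu\}$ are Byzantine (legal, by the size bound above) and are instructed to replay, message by message and in the same order, exactly what they sent in $E_A$; this is well defined precisely because their run in $E_A$ depended on neither $\mu$ nor $B$. The peers of $B$ are honest in $E_B$, but all of their messages are delayed until after $\mu$ halts, and $\mu$ runs honestly with the same random tape it used in $E_A$. A routine induction on time then shows that $\mu$'s view in $E_B$ is identical to its view in $E_A$: it receives exactly the replayed messages (and nothing from $B$), and all of its own queries --- which avoid $j$ --- return the same bits, since $X$ and $Y$ agree off $j$. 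Consequently $\mu$ halts in $E_B$ with output $X$, which is wrong because $Y_j\neq X_j$; as $E_B$ has at most $\lfloor\byzfrac k\rfloor\le\byzfrac k$ Byzantine peers, this contradicts correctness. Hence $|Q_\mu|>n/2$, so $E_A$ itself is an execution in which a peer queries more than $n/2$ bits.

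The step I expect to need the most care is checking that $E_A$ and $E_B$ are genuine executions of the model for \emph{randomized} protocols. Two points arise. First, the quiescence restriction forbids stalling honest peers forever; but in $E_A$ the honest peers $D$ terminate (by correctness), and in $E_B$ the peers of $D\setminus\{\mu\}$ merely follow their terminating $E_A$ runs while $\mu$ follows its terminating $E_A$ run, so the honest peers never reach quiescence before $\mu$ halts, after which all delayed messages can be released --- only finite delays are used. Second, the Byzantine ``replayers'' in $E_B$ must not be allowed to learn the honest coins of $E_A$; I would handle this by fixing the random tapes and reasoning per tape (if the protocol is correct with probability $1$ the contradiction above is immediate), and for a whp‑correct protocol by an averaging step: since $|Q_\mu|\le n/2$ for every tape, there is a coordinate $j^{\star}$ lying outside $Q_\mu$ for at least half of the tapes, and then for $j=j^{\star}$ the event ``$\mu$ outputs the wrong bit in $E_B$'' has probability at least $1/2-o(1)$ over the tapes, contradicting that the protocol is correct on input $Y$ with high probability. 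In short, the combinatorial core --- that $\byzfrac\ge 1/2$ is exactly what lets the set of replayers fit inside the Byzantine budget --- is immediate; the delicate part is the bookkeeping that keeps $E_B$ a legitimate execution rather than one that covertly depends on the protocol's randomness.
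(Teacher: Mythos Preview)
Your overall silence-and-replay indistinguishability is the paper's strategy, and the averaging over the flipped bit $j^\star$ is essentially the paper's argument for choosing $i$. The gap is in $E_A$: you delay $\mu$'s outgoing messages and assert this is ``legitimate, since the peers of $D\setminus\{\mu\}$ do not need $\mu$'s messages to make progress.'' That is unjustified. A correct protocol may have every peer wait for $k-f-1$ distinct others before proceeding; each member of $D\setminus\{\mu\}$ then has only $|D\setminus\{\mu\}|-1=k-f-2$ available senders (both $B$ and $\mu$ are silent toward it), so $D\setminus\{\mu\}$ stalls, and once it stalls so does $\mu$. Your later appeal ``in $E_A$ the honest peers $D$ terminate (by correctness)'' does not rescue this: termination holds precisely because the quiescence rule forces the adversary to release $\mu$'s delayed messages, and once those are released the isolation of $D\setminus\{\mu\}$ from $\mu$---on which you base the well-definedness of the replay---is gone.

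The paper avoids this by corrupting the peers that interact with $\mu_1$ \emph{already in execution $A$} (giving them adversary-chosen randomness $\hat r$) and delaying the honest half instead; quiescence may still occur, but the paper shows the adversary can fix a single $\hat r$ for which quiescence has probability $<1/n$ over $\mu_1$'s coins (via an auxiliary crash-only execution $A'$). This simultaneously resolves your second worry: the replayers' behavior in $B$ is determined by that fixed $\hat r$, never by honest coins drawn in another execution. Your ``fix the tapes and average'' sketch gestures in this direction but does not supply the missing piece---a single adversary strategy (a single $\hat r$) that both keeps quiescence rare and makes the indistinguishability go through.
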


\begin{proof}
Assume towards contradiction that there exists a randomized protocol $\Protocol$ such that in every execution of $\Protocol$ every peer queries at most $n/2$ bits. Consider the following (types of) executions.
\begin{description}
\item[\textbf{Execution $A$.}] The input is all 0's.
The adversary corrupts the peers $\Speer_2, \dots, \Speer_{k/2}$ and delays the peers  $\Speer_{k/2+1}, \dots, \Speer_{k}$ until $\Speer_1$ terminates. (If 
quiescence is reached before $\Speer_1$ terminates, then the adversary forwards all delayed messages and abandons its attempt to fail the protocol). 
The adversary sets the random string used by the peers $\Speer_2, \dots, \Speer_{k/2}$ to $\hat{r}$ ($\hat{r}$ should be picked in a way that makes the probability of quiescence negligible; we explain later how this is done). The corrupted peers act as they would in an honest execution (except they use $\hat{r}$ set by the adversary). 
\item[\textbf{Execution $B$.}] The input is all 0's, except for one index $i$ ($i$ should be picked according to the random distribution used by $\Speer_1$ such that the probability that $i$ gets queried by $\Speer_1$ is less than $1/2$; we explain later how this is done). The adversary corrupts the peers $\Speer_2, \dots, \Speer_{k/2}$ and delays the peers  $\Speer_{k/2+1}, \dots, \Speer_{k}$ until $\Speer_1$ terminates (same as in execution $A$). The adversary sets the random string used by the peers $\Speer_2, \dots, \Speer_{k/2}$ to the same $\hat{r}$ as in execution $A$. The corrupted peers act as if they are in execution $A$. 
\end{description}
    
Denote by $A_{r,\hat{r}}$ (respectively $B_{r,\hat{r}}$) an execution of type $A$ (respectively $B$) where $\Speer_1$ uses $r$ as its random string and $\Speer_2, \dots, \Speer_{k/2}$ use $\hat{r}$. We denote by $A_r$ (respectively, $B_r$) the execution $A_{r,\hat{r}}$ (resp., $B_{r,\hat{r}}$) where $\hat{r}$ is chosen by the adversary.
    
Note that from the point of view of $\Speer_1$, executions $A_r$ and $B_r$ are indistinguishable if $\Speer_1$ does not query bit $i$. Also note that the adversary's strategy is only valid if $\Speer_1$ does not reach a \emph{quiescent} state, namely, one in which it will not terminate before receiving a message from at least one of the peers $\Speer_{k/2+1}, \dots, \Speer_k$. 
We now show that $\hat{r}$ and $i$ can be chosen such that the probability (over the possible random choices $r$ of $\Speer_1$) of reaching quiescence is negligible and the fraction of pairs $(A_r,B_r)$ in which $\Speer_1$ queries bit $i$ is at most $1/2$.

W.l.o.g, we assume that $\Speer_1$ picks exactly $n/2$ bits to query.
First, for every set $S$ of $n/2$ bits, the adversary knows the probability $p_S$ that $\Speer_1$ will query $S$. For every $i$, denote the probability that bit $i$ gets queried by $\Speer_1$ by $x_i = \sum_{S:i\in S} p_S$.
Note that $\sum_{i=1}^n x_i =\sum_S \frac{n}{2} \cdot p_S = n/2$ (since every set $S$ contributes its probability $p_S$ to the sum
$n/2$ times ). The adversary picks bit $i$ with probability $p_i = \frac{1-x_i}{n/2}$ (note that $\sum_i p_i=1$).
    
Let $\hat{S}$ and $I$ be random variables indicating the random selection of a set $S$ by $\Speer_1$ and an index $i$ by the adversary, respectively. Then
\begin{eqnarray*}
P[I\in \hat{S}] &=& \sum_{i=1}^n P[i\in \hat{S} \wedge I=i] ~=~ \sum_{i=1}^n P[i\in \hat{S}] \cdot P[I=i] ~=~ \sum_{i=1}^n x_i \cdot \frac{1-x_i}{n/2} 
\\
&=& \frac{2}{n} \sum_{i=1}^n x_i \cdot (1-x_i) ~\leq~ \frac{2}{n} \cdot \frac{n}{4} ~=~ \frac{1}{2}~,
\end{eqnarray*}
where the second equality follows since $\hat{S}$ and $I$ are independent and the inequality is derived by the Cauchy–Schwarz inequality.

    
Next, we show that there is a choice of $\hat{r}$ for which the probability (over the possible random choices $r$ of $\Speer_1$) of reaching quiescence in $A_r$ is negligible.
Denote by $Q_{\hat{r}}$ the event of reaching quiescence in $A_{r,\hat{r}}$ (where $\Speer_1$ uses the random string $r$). Denote by $Q$ the event of reaching quiescence in $A_{r,\hat{r}}$ (with the random strings $r$, $\hat{r}$).
Assume towards contradiction that for every value of $\hat{r}$, the probability of $Q_{\hat{r}}$, over the possible random choices $r$ of $\Speer_1$, is $P[Q_{\hat{r}}] \geq 1/n$. Consider an execution $A'_{r,\hat{r}}$, where the input is all 0's, the adversary crashes peers $\Speer_{k/2+1},\dots, \Speer_k$, and the random strings used by the peers $\Speer_1$ and $\Speer_{2},\dots, \Speer_{k/2}$ are $r$ and $\hat{r}$, respectively. Note that if quiescence is reached in $A_{r,\hat{r}}$, then quiescence is also reached in $A'_{r,\hat{r}}$.
Hence, $A'_{r,\hat{r}}$ does not terminate. Noting that
$$P[Q] ~=~ \sum_{\hat{r}} P[Q_{\hat{r}}] \cdot P[\hat{r}] ~\geq~ \frac{1}{n}\sum_{\hat{r}}  P[\hat{r}] ~=~ \frac{1}{n},$$ 
we get that the probability that the algorithm fails (does not terminate) in $A'_{r,\hat{r}}$ is at least $1/n$, a contradiction.
It follows that there must be a choice of $\hat{r}$ such that the probability of $Q_{\hat{r}}$ over the possible random choices $r$ of $\Speer_1$ satisfies $P[Q_{\hat{r}}] < 1/n$. The theorem follows.
\end{proof}

\subsection{Minority Byzantine Failures (\texorpdfstring{$\byzfrac < 1/2$}{less than half})}

We now turn to the case where $\byzfrac < 1/2$.
We first show that in this case, one can do slightly better than the naive protocol even deterministically, presenting a deterministic asynchronous protocol that solves \download\ with query complexity $Q=O\left(\byzfrac  n \right)$.
Next, we show that using randomization yields an improved query complexity.

\subsection{Deterministic \download\ protocol}

We now present a deterministic asynchronous \download\ protocol for Byzantine faults where $\byzfrac< 1/2$.
%
Consider the deterministic synchronous protocol presented in \cite{ABMPRT24}, where a committee $C_i$ of size $2f+1$ is formed for every $i \in [1, n]$, and every \peer $\Speer\in C_i$ queries the bit $\invec[i]$ and broadcasts 
the message $(\Speer,\invec[i]=b_i)$ to all other peers.
In order to adapt that protocol to the asynchronous model, we modify the final part of the protocol, requiring each non-committee peer $\Speer\notin C_i$ to wait until it gets messages $(\Speer',\invec[i]=b)$ with identical value $b$ from at least $f+1$ \peers $\Speer'$, and then decide $\res^\Speer[i] \gets b$.
\begin{theorem}
When $\byzfrac<1/2$, there exists a deterministic  asynchronous protocol that solves \download\ with $Q=O(\byzfrac n)$, $\Time=O\left(\frac{\byzfrac n}{\phi}\right)$ and $\Message = O(f\cdot  n)$.
\end{theorem}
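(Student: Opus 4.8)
The plan is to analyze the asynchronous adaptation of the synchronous committee-based protocol of \cite{ABMPRT24} and verify its three claimed complexity bounds. Recall the setup: for each index $i \in [1,n]$ a committee $C_i$ of size $2f+1$ is fixed (known to all peers, e.g., the peers with IDs $\{ (i \bmod k)+1, \dots\}$ taken cyclically), each $\Speer \in C_i$ queries $\invec[i]$ and broadcasts $(\Speer, \invec[i]=b_i)$, and each $\Speer \notin C_i$ waits for $f+1$ matching messages $(\Speer', \invec[i]=b)$ before setting $\res^\Speer[i] \gets b$. The proof breaks into (i) correctness (each nonfaulty peer eventually outputs the right bit and terminates), and (ii) the three complexity counts.

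For correctness, first observe each committee $C_i$ contains at least $2f+1 - f = f+1$ nonfaulty peers, all of which eventually query $\invec[i]$ and broadcast the true value $b_i$. In the asynchronous model every message from a nonfaulty peer is eventually delivered, so every nonfaulty $\Speer \notin C_i$ eventually receives $f+1$ matching messages with value $b_i$ and can decide. It remains to argue \emph{safety}: $\Speer$ cannot be tricked into deciding a wrong value $b' \neq b_i$. This holds because a set of $f+1$ senders claiming value $b'$ must contain at least one nonfaulty peer in $C_i$ (since at most $f$ peers are faulty), and a nonfaulty committee member only ever broadcasts the true value $b_i$; hence no $f+1$-sized set of matching wrong-value messages can exist. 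Nonfaulty committee members themselves know $b_i$ directly from their query. Termination is then immediate: once all $n$ bits are decided, $\Speer$ halts.

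For the complexity bounds: \textbf{Query.} Each index $i$ is queried only by members of $C_i$, and a peer $\Speer$ is in $C_i$ for an index $i$ only if it is one of the $2f+1$ designated committee members; with a balanced (cyclic) committee assignment, $\Speer$ lies in at most $\lceil n(2f+1)/k \rceil = O(\byzfrac n)$ committees (using $f = \byzfrac k$), giving $\Query = O(\byzfrac n)$. \textbf{Message.} Each committee member sends one broadcast of size $O(\log n)$ (an index and a bit) per index it is responsible for; summed over all committee memberships this is $\sum_i (2f+1) = O(f n)$ messages, so $\Message = O(f\cdot n)$. \textbf{Time.} A committee member's total query load is $O(\byzfrac n)$ bits' worth of work, which it must transmit in packets of size $\phi$; after $O(\byzfrac n/\phi)$ time units all its broadcasts are delivered (every nonfaulty message arrives within bounded time per the standard convention used throughout), and each non-committee peer then decides; the pipeline depth is one round of broadcasts plus the transmission time, giving $\Time = O(\byzfrac n / \phi)$.

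I expect the main subtlety to lie not in the counting but in the safety argument together with the termination guarantee under asynchrony — specifically ensuring that a nonfaulty peer is never \emph{stuck} waiting for $f+1$ matching messages it will never get, and is never \emph{misled} by the $f$ Byzantine peers colluding on a wrong value. The first is handled by the $|C_i| \ge 2f+1$ sizing (so $f+1$ honest committee members always eventually speak consistently), and the second by the $f+1$ threshold exceeding the $f$ corrupt peers; both are clean but must be stated carefully since, unlike the crash case, Byzantine members of $C_i$ can also inject the \emph{correct} value while other Byzantine peers inject a wrong one, so one must argue about the quorum intersection with the honest committee members rather than merely with honest peers globally. Once these two points are nailed down, the remaining argument is the routine load-balancing computation sketched above.
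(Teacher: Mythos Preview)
Your proposal is correct and follows essentially the same approach as the paper: the same committee-sizing argument for safety (any $f+1$ matching reports contain an honest committee member) and liveness (each committee has at least $f+1$ honest members who eventually deliver), and the same load-balancing counts for $\Query$, $\Message$, and $\Time$. Your framing of the time bound via a single peer's $O(\byzfrac n)$ transmission load is equivalent to the paper's ``how many committees run in parallel'' calculation, and your closing remark about restricting attention to committee-member senders is a reasonable clarification that the paper leaves implicit.
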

\begin{proof}
To establish correctness, we first note that if a peer $\Speer \notin C_i$ receives $f+1$ messages with identical value $\invec[i]=b$, 
then, since at least one of the messages came from an honest peer, $b=b_i$.
Next, we argue that eventually, every peer $\Speer \notin C_i$ receives $f+1$ messages with identical values for $\invec[i]$. 
This is because $|C_i|=2f+1$, so at least $f+1$ peers are honest. Since honest peers in $C_i$ can only be delayed (and not forced to send an incorrect value), eventually, every non-committee peer $\Speer\notin C_i$ will receive the messages from the ($f+1$ or more) honest peers, which will all be for $b_i$.

Turning to complexity, note that a peer queries once for every committee it belongs to. Since every committee is of size $2f+1$, and committees are selected in round-robin order, every peer belongs to at most $(2f+1)n/k$ committees. Hence, the query complexity is $\Query=O(\byzfrac n)$.
The number of committees that can operate in parallel is $\frac{k}{2f+1} \cdot \phi$, since $k/(2f+1)$ committees will be pairwise disjoint, and every peer can participate in $\phi$ committees at the same time. Hence, the time complexity will be $\Time = \frac{n}{\frac{k\phi}{2f+1}} = O\left(\frac{\byzfrac n}{\phi}\right)$. Every committee sends $2f+1$ messages, so overall $\Message = O(f\cdot  n)$.
\end{proof}

\subsection{Randomized \download\ protocols}
In this section, we show how to adapt the randomized synchronous protocols 3 and 4 of the companion paper \cite{ACKKMP25a} to the asynchronous setting.
We start by defining some notations and data structures that are used in the protocols. Then, proceed to describe the modified protocols. Here, we assume that the size of a messages is $\phi = O(n)$.
\subsubsection{\texorpdfstring{$t$-frequent strings and decision trees}{frequent strings and decision tree}}
Throughout, we refer to the value of a contiguous segment of input bits as a bit string. In our protocols, the input vector $\invec$ is partitioned into $\Inum= \lceil n/\varphi\rceil$ segments of roughly equal length $\varphi$. Hereafter, the $\ell$th segment is denoted by $\invec[\ell,\varphi]$. Peers send queries for a specific segment $\invec[\ell,\varphi]$ of input bits and receive the correct bit string as a response. They then send their findings to all other peers, in a message of the form $\langle \indID,s \rangle$. Two received strings $\langle \indID,s \rangle$ and $\langle \indID',s' \rangle$ are said to be \emph{overlapping} if they correspond to the same segment, namely, $\ell=\ell'$, and \emph{consistent} if in addition $s=s'$. We refer to consistent bit strings received from at least $t$ different peers as \emph{$t$-frequent strings}. Denote by $\FS(MS,t)$ the function that takes a multiset $MS$ of overlapping bit strings and a threshold $t$ and returns the set of $t$-frequent strings found in $MS$.

Consider a set $S=S[\ell,\varphi]$
of overlapping bit strings (that encapsulates possible versions for the same 
segment of the input vector) corresponding to the $\ell$th segment. Note that $S$ might not be entirely consistent, due to messages sent by Byzantine peers. We describe the construction of a \emph{decision tree} for the set $S$ as follows. If $|S|=1$ create a single node labeled by the single-bit string in S. Otherwise, arbitrarily pick two (non-consistent) bit strings $s,s'\in S$
and let $i$ be the first \emph{separating} index in which they differ, i.e., $s[i] \neq s[i']$. Create a root node $v$ and label it with the separating index $i$, i.e., $l(v) =i$.  Let $S^b = \{s\in S \mid s[i]=b\}$, recursively construct $T^0, T^1$ from $S^0, S^1$ respectively, and add an edge from $v$ to the roots of $T_0, T_1$. The resulting structure has different (non-consistent) bit strings at its leaves (as labels) and separating indices at its inner nodes such that querying the input array $\invec$ at those indices (that can be found in the inner nodes), resolves conflicts between the different versions found in $S$ and leaves us with a single bit string that matches those queries. As long as one of the leaves contains the correct bit string, this procedure, referred to as \emph{determining} the decision tree, returns the correct version of the bit string.
For a full technical description, see Protocol \ref{alg:decision_tree}.

\begin{algorithm}
    \caption{\ConstDecTree$(S)$
    \\
    input: Set of strings $S$
    \\
    output: Node 
    labeled tree $T$}
    \label{alg:decision_tree}
\begin{algorithmic}[1]
\State Create a root node $v$
\If{{$|S|>1$}}
    \State Set $i \leftarrow $ smallest index of bit on which at least two strings in $S$ differ
    \State $~label(v)\leftarrow i$   
    \State Set $S^b \gets \{s \in S \mid s[i] = b\}$ for $b\in \{0,1\}$
    \State $T^0\gets$ \ConstDecTree $(S^0)$
    \State $T^1\gets$ \ConstDecTree$(S^1)$

    \State Let $T$ be tree rooted at $v$ with $\leftchild(v)\leftarrow T^0$ and $\rightchild(v)\leftarrow T^1$ 
\Else 
    \State $label(v)\gets s$, where $S=\{s\}$ \Comment{$|S|=1$, $v$ is a leaf}
    \State Let $T$ be a tree consisting of the singleton $v$.
\EndIf
\State Return $T$
\\

\Procedure{\Determine}{T}
\State $J =\{j~|~\exists u \in T \;\text{s.t.}\; j=label(u)\}$
\For {all  $j \in J$ } in parallel
\State $query(j)$
\EndFor
\State Let $v$ be the root of $T$
\While{$v$ is not a leaf}
    \State Let $j = label(v)$.
    \If{$b_j =0$}
        \State Set $v\gets \leftchild(v)$
    \Else
        \State Set $v\gets \rightchild(v)$
    \EndIf
\EndWhile
\State Return $label(v)$
\EndProcedure
\end{algorithmic}
\end{algorithm}

%

In the remainder of this section, we present the modified protocols. The idea here is to change the threshold for entering the decision tree such that in any set of $\gamma k - f$ honest peers, every segment is queried by at least $t$ peers from that set. Other than that modification, we also need to make sure that every peer waits for at most $\gamma k $ messages each time it reads.


\subsubsection{2-cycle protocol}
We start with an overview of Protocol \ref{alg:two_rounds}.
The entire input is partitioned into $\Inum = \lceil n/\varphi\rceil$ segments of size $\varphi$. At the start of the execution, each peer $\Speer$ picks a segment $\indID^\Speer$
u.a.r, then it queries $\indID^\Speer$ and broadcasts the resulting bit string to every other peer. Then, every peer utilizes decision trees to correctly determine the correct bit string for every segment $\indID$ by constructing a decision tree from all the overlapping bit strings corresponding to $\indID$ that were received at least $t = \frac{(\goodfrac -\byzfrac) k}{2 \Inum}$ times.

\begin{algorithm}
\caption{2-Round \download\ with $\beta<1/2$; Code for peer $\Speer$}
\label{alg:two_rounds}
\begin{algorithmic}[1]
\If{ $ k \leq 32(c+1) \ln n/\gamma$} query every bit and return
\EndIf
\If{$k < \sqrt{n/(\gamma-\beta)} \ln n$}
    $\varphi \leftarrow \lceil{\frac{32 (c+1)n \ln n}{(\gamma-\beta) k}}\rceil$
\Else{ ~$\varphi \leftarrow \lceil{32(c+1)\sqrt{n/(\gamma-\beta)}}\rceil$}
\EndIf
\State $t\leftarrow (\gamma-\beta) k/(2\Inum)$
\State Randomly select 
$\indID^\Speer \in [1, \Inum]$
\State Set string $s^\Speer \gets query(\invec[\indID^\Speer, \varphi])$
\State Broadcast $\langle \indID^\Speer,s^\Speer \rangle$
\State {\bf Wait} for a  $\langle \indID,s \rangle$ message from at least $\gamma k$ peers.
\For{{$\indID=1$} to $\Inum$} in parallel

    \State Construct the multiset $S_\indID \gets \{s ~\mid~ \langle \indID,s \rangle \hbox{ received}\}$

\State $T_\indID \gets \ConstDecTree(\FS(S_\indID,t))$

\State $s^\indID \gets \Determine(T_\indID)$
\EndFor
\State 

Output $s^1 s^2 \cdots s^{\Inum}$

\end{algorithmic}
\end{algorithm}

\subparagraph*{Correctness.}
Consider an execution of the protocol, and let $R^\Speer$ be the set of honest peers peer $\Speer$ heard from during the execution. Note that since $R^\Speer$ is decided by the adversary before any peer selects which segment to query, the probability of each peer $\Speer' \in R^\Speer$ to query a specific segment $\indID$ is $1/\Inum$ (i.e., unaffected by the adversary's choice). Also note that $|R^\Speer| \geq (\gamma-\beta)k$ (since $\beta k$ might be slow and another $\beta k$ Byzantine). Denote the number of honest peers, from which peer $\Speer$ heard, that pick the segment $\indID$ by $k^\Speer_\indID$.
The protocol succeeds if $k^\Speer_\indID \ge t$ for every segment $\indID$, 
since every decision tree $T_\indID$ will contain a leaf with the correct string for the segment $\indID$ returned correctly by Procedure \Determine.
\begin{claim}
\label{cl:enough readers}
For constant $c\ge 1$, if $t \geq 8(c+2)\ln n$, then $k^\Speer_\indID \ge t$
for every segment $\indID \in [1, \Inum]$,
with probability at least $1-1/n^c$.
\end{claim}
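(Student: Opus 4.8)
The plan is to fix a segment $\indID \in [1,\Inum]$ and a nonfaulty peer $\Speer$, and to observe that $k^\Speer_\indID$ is a sum of independent indicator random variables: one for each honest peer $\Speer' \in R^\Speer$, indicating whether $\Speer'$ selected segment $\indID$. The crucial point, already noted in the text, is that $R^\Speer$ is determined by the adversary \emph{before} any peer makes its random segment choice, so each of the $|R^\Speer| \ge (\gamma-\beta)k$ honest peers in $R^\Speer$ selects $\indID$ independently with probability exactly $1/\Inum$. Hence $\Exp[k^\Speer_\indID] = |R^\Speer|/\Inum \ge (\gamma-\beta)k/\Inum = 2t$, using the definition $t = (\gamma-\beta)k/(2\Inum)$.

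Next I would apply a multiplicative Chernoff bound to the lower tail of $k^\Speer_\indID$. Since the mean $\mu := \Exp[k^\Speer_\indID]$ satisfies $\mu \ge 2t$, deviating below $t$ means deviating below $\mu/2$, so $\Pr[k^\Speer_\indID < t] \le \Pr[k^\Speer_\indID < \mu/2] \le \exp(-\mu/8) \le \exp(-t/4)$. Invoking the hypothesis $t \ge 8(c+2)\ln n$ gives $\exp(-t/4) \le \exp(-2(c+2)\ln n) = n^{-2(c+2)} \le n^{-(c+1)}$. Finally I would take a union bound over all $\Inum \le n$ segments, so the probability that $k^\Speer_\indID < t$ for \emph{some} $\indID$ is at most $\Inum \cdot n^{-(c+1)} \le n \cdot n^{-(c+1)} = n^{-c}$, which is the claimed bound $1 - 1/n^c$.

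The only subtlety — and the step I would be most careful about — is justifying that the indicator variables are genuinely independent with the stated marginal probability. This hinges on the adversarial model: the adversary must commit to message latencies (and hence to which honest peers $\Speer$ can hear from within the protocol's waiting window) at the start of each cycle, before the honest peers draw their random segments; since the draws are made locally and independently by each peer, conditioning on the event "$\Speer' \in R^\Speer$'' does not bias $\Speer'$'s choice of segment. I would state this explicitly and point back to the adversary definition in Section~\ref{sec: model} and to the paragraph preceding the claim. The Chernoff and union-bound steps are then entirely routine; the constant $8(c+2)$ in the hypothesis is chosen precisely so that, after halving the exponent twice (once from the $\mu \ge 2t$ slack, once from the Chernoff factor of $1/8$), one still clears the $\Inum \le n$ union bound with a spare factor of $n$.
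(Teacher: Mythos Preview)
Your approach is essentially the paper's: fix a segment, compute the mean $|R^\Speer|/\Inum \ge 2t$, apply a lower-tail Chernoff bound, then union bound. The one omission is that the paper union bounds over all honest peers $\Speer$ as well as over all segments $\indID$ (the claim's statement leaves $\Speer$ unquantified, but protocol correctness needs the conclusion simultaneously for every honest $\Speer$, since each peer sees its own set $R^\Speer$). Your sharper exponent $\exp(-t/4)\le n^{-2(c+2)}$ is more than enough to absorb the extra factor of $k\le n$, so this is a one-line fix; just avoid weakening to $n^{-(c+1)}$ before taking the full union bound.
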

\begin{proof}
Fix $\indID \in [1,\Inum]$.
The expected number of honest peers from which peer $\Speer$ heard that pick the segment $\indID$ is 
$\Exp[k^\Speer_\indID] =|R^\Speer|/\Inum \ge(\gamma-\beta) k/\Inum=2t$. 
Therefore, applying Chernoff bounds, 
$\Pr[k^\Speer_\indID < t] \leq e^{-t/8}$
$\leq 1/n^{c+2}$ for $t \geq 8(c+2)\ln n$. Taking a union bound over all segments and peers ($\Inum +k \leq 2n$ random variables), the probability that $k^\Speer_\indID < t$ for \emph{any} segment $\indID$ and peer $\Speer$ is less than $1/n^c$.
\end{proof}

\subparagraph*{Query Complexity}
Denote by $\FS_\indID$ a set of $t$-frequent overlapping strings for segment $\indID$.
The cost of querying in Step 2 for segment $\indID$ (where $\FS_\indID$ is the set calculated by the peer) is the number of internal nodes of the decision tree, which is $|\FS_\indID| -1$.
Let $x_\indID$ be the number of overlapping strings received for segment $\indID$ in Step 1 (including copies). Then  $|\FS_\indID|\leq x_\indID/t$. Since each peer sends no more than one string overall, 
$\sum_\indID x_\indID=k$.
Hence, the cost of determining all segments is $\sum_\indID |\FS_\indID| \leq  \sum_\indID x_\indID/t \leq k/t$. 

The query cost per peer, $\Query$, is the cost of determining every segment using decision trees plus the initial query cost; hence $\Query\leq k/t + \varphi$.  Since $t=\frac{(\gamma-\beta) k}{2\Inum}$, $ \Query\leq 2\frac{ \lceil{n/\varphi}\rceil} {\gamma-\beta} + \varphi $.
To satisfy the premise of Claim {\ref{cl:enough readers}}, it is required that
$t = \frac{(\gamma-\beta) k}{2\Inum} \geq  8(c+1) \ln n$. To set the value of $\varphi$, we consider the following cases.
\begin{description}
    \item[\textbf{Case 1}] \textbf{($k\geq \sqrt{n/(\gamma-\beta)}\ln n$).} Set  $\varphi=\lceil{32(c+1) \sqrt{n/(\gamma-\beta)}}\rceil$.
Since $t =\frac{(\gamma-\beta) k}{2 \lceil{\frac{n}{\varphi}}\rceil} \geq \frac{(\gamma-\beta) k}{4 n/\varphi}\geq 8 (c+1) \ln n$, then $t$ satisfies the premise of Claim \ref{cl:enough readers} and
$\Query \leq k/t + \varphi = \frac{k}{\frac{ (\gamma-\beta) k}{2\lceil{\frac{n}{\varphi}}\rceil}} + \varphi = \frac{2 \lceil{n/\varphi\rceil}}{\gamma-\beta} +\varphi=O\left( \sqrt{\frac{n}{\gamma-\beta}}\right)$.
    \item[\textbf{Case 2}] \textbf{($32(c+1) \ln n /\gamma <k < \sqrt{n/(\gamma-\beta)}\ln n $).} Set $\varphi= \lceil{\frac{32 (c+1)n \ln n}{(\gamma-\beta) k}}\rceil$.
    Then $t =\frac{(\gamma-\beta) k}{2 \lceil{\frac{n}{\varphi}}\rceil} \geq \frac{(\gamma-\beta) k}{4 n/\varphi} \geq 8 (c+1) \ln n$, satisfying Claim \ref{cl:enough readers}. Also $k/t \leq  \sqrt{n/(\gamma-\beta)} \ln n/t= O\left( \sqrt{\frac{n}{\gamma-\beta}}\right)$. Thus,
$\Query \leq k/t + \varphi = O\left( \sqrt{\frac{n}{\gamma-\beta}}+ \frac{n \ln n}{(\gamma-\beta) k} \right)$. 
    \item[\textbf{Case 3}] \textbf{($k\leq 32(c+1)\ln n/ \gamma$).} Each peer queries all $n$ bits, resulting in $\Query=n$.
\end{description}
Since the protocol can check to see which case it is in (by inspecting $k$, $n$, and $\gamma$), the overall query complexity is $\Query = O\left(\min \left\{\sqrt{\frac{n}{\gamma-\beta}}+ \frac{n \ln n}{(\gamma-\beta) k}, n\right\}\right)$.
\begin{theorem}
    There is a 2-cycle asynchronous randomized protocol for \download\ with $\Query=O\left( \sqrt{\frac{n}{\gamma-\beta}}+ \frac{n \ln n}{(\gamma-\beta) k} \right)$
    .
\end{theorem}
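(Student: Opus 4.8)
The plan is to verify that Protocol~\ref{alg:two_rounds} is the protocol claimed by the theorem: it runs in two cycles, it is legitimate in the asynchronous adversarial model, it is correct with high probability, and it meets the stated query bound. The first cycle of each peer $\Speer$ consists of drawing a uniformly random segment $\indID^\Speer\in[1,\Inum]$, querying it (cost $\varphi$), broadcasting $\langle \indID^\Speer,s^\Speer\rangle$, and then entering the waiting stage until $\langle \indID,s\rangle$-messages have arrived from $\gamma k$ distinct peers. The second cycle consists of constructing, for each segment $\indID$, the decision tree $T_\indID=\ConstDecTree(\FS(S_\indID,t))$ over the $t$-frequent overlapping strings with $t=(\gamma-\beta)k/(2\Inum)$, running \Determine$(T_\indID)$ (all of whose queries are issued in the query stage of that cycle, since \Determine\ queries every tree label in parallel before walking the path), and outputting the concatenation $s^1\cdots s^{\Inum}$. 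Hence it is genuinely a $2$-cycle protocol.

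Next I would establish legitimacy. Waiting for $\gamma k$ messages cannot deadlock: there are at least $\gamma k=k-f$ honest peers, each broadcasts in its first cycle, and by the quiescence restriction on the adversary (a peer that waits solely for messages from honest peers must eventually be released), every honest peer eventually receives $\gamma k$ such messages. The crucial probabilistic point — identical to the synchronous analysis — is that the adversary must fix all first-cycle message delays, and hence the set $R^\Speer$ of honest peers whose message $\Speer$ receives within its waiting window, before any peer draws its random segment; therefore each of the $|R^\Speer|\ge(\gamma-\beta)k$ honest peers in $R^\Speer$ selects each particular segment independently with probability $1/\Inum$, unaffected by the adversary's choices.

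For correctness I would invoke Claim~\ref{cl:enough readers}: fix the constant $c\ge 1$, and note that the case analysis always sets $\varphi$ so that $t=(\gamma-\beta)k/(2\Inum)\ge 8(c+2)\ln n$ (in the remaining regime the protocol degenerates to querying all $n$ bits). Thus, with probability at least $1-1/n^c$, every segment $\indID$ is queried by at least $t$ honest peers of $R^\Speer$, so the correct bit string for $\indID$ is received at least $t$ times, is therefore a $t$-frequent string, and appears as a leaf of $T_\indID$. Since the inner nodes of $T_\indID$ are exactly separating indices, \Determine\ queries them and follows the unique root-to-leaf path consistent with the true input, returning that correct leaf; hence every honest peer outputs $\invec$.

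Finally, for the query bound: the first cycle costs $\varphi$ queries, and the second costs $\sum_\indID(|\FS_\indID|-1)$, one per inner node of each tree. Writing $x_\indID$ for the number of received strings for segment $\indID$, we have $|\FS_\indID|\le x_\indID/t$ and $\sum_\indID x_\indID\le k$ (each peer contributes at most one string), so the second cycle costs at most $k/t=2\lceil n/\varphi\rceil/(\gamma-\beta)$, giving $\Query\le\varphi+2\lceil n/\varphi\rceil/(\gamma-\beta)$. Balancing this while respecting $t\ge 8(c+2)\ln n$ yields the theorem: $\varphi=\Theta(\sqrt{n/(\gamma-\beta)})$ when $k\ge\sqrt{n/(\gamma-\beta)}\ln n$ gives $\Query=O(\sqrt{n/(\gamma-\beta)})$; $\varphi=\Theta(n\ln n/((\gamma-\beta)k))$ in the intermediate range gives $\Query=O(\sqrt{n/(\gamma-\beta)}+n\ln n/((\gamma-\beta)k))$; and for the remaining very small $k$ the trivial bound $\Query=n$ is itself $O(\sqrt{n/(\gamma-\beta)}+n\ln n/((\gamma-\beta)k))$. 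I expect the main obstacle to be the asynchrony handling rather than the combinatorics: one must argue carefully that the ``wait for $\gamma k$ messages'' instruction is both non-blocking (via quiescence) and does not let the adversary bias the segment lottery, and that two cycles truly suffice despite \Determine\ issuing a second batch of (adaptively structured but non-adaptively issued) queries.
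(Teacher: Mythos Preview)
Your proposal is correct and follows essentially the same approach as the paper: you invoke Claim~\ref{cl:enough readers} for correctness using the key observation that the adversary fixes the delay pattern (and hence $R^\Speer$) before the random segment choices, and you bound $\Query\le\varphi+k/t$ via $\sum_\indID x_\indID\le k$ with the same three-case analysis on $k$. The only addition is your explicit discussion of non-blocking and of the two-cycle structure, which the paper leaves implicit in the model and algorithm description.
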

\subsubsection{Multi-cycle protocol}
In this subsection, we extend Protocol \ref{alg:two_rounds} into a multi-cycle protocol that improves the query complexity in expectation. The first cycle is the same as in Protocol \ref{alg:two_rounds}. In every subsequent cycle $i>1$, the input is partition into $\Inum_i = n/(2^i\varphi)$ segments of size $\varphi_i =2^i\cdot \varphi$. We refer to segments in cycle $i$ as $i$-segments. Note that each $i$-segment is a concatenation of two $(i-1)$-segments. Each peer $\Speer$ picks an $i$-segment $\indID_i^\Speer$ u.a.r and determines its value by constructing decision trees for the two $(i-1)$-segments that compose it from the bit strings (corresponding to each $(i-1)$-segment respectively) received at least $t_{i-1}$ times in cycle $i-1$. Then, it broadcasts the resulting bit string (of $\indID_i^\Speer$) to every other peer. Since the size of the picked segment is doubled each cycle, after $O(\log \Inum)$ cycles, each peer picks the entire input, learns it, and outputs correctly.

\begin{algorithm} 
\caption{Download Protocol with $\tilde{O}(n/\gamma{k})$  Expected Queries and ${O(\log \gamma k)}$ Time $\beta<1/2$}
\label{alg:time-query}
\begin{algorithmic}[1]

\For{$i=$ 0 to $\lg\Inum$}
    \State Randomly pick a segment $\indID \in [1, \Inum_i]$ of size $\varphi_i$
    \If{$i=0$}
        \State Set string $s=query(\invec[\indID, \varphi])$ and broadcast $\langle \indID,s,0\rangle$
    \Else
        \State Wait for $\langle \indID,s,i-1\rangle$ messages from at least $\gamma k$ peers.
        \State $\indID_L \gets 2\indID-1$;$\indID_R \gets 2\indID$
        \For{$u \in \{\indID_L, \indID_R\}$} in parallel 
            \State Construct the multiset $S(u) \leftarrow \{s~\mid~  \text{received a message of the form }\langle u,s,i-1\rangle\}$ 
            \State $t_{i-1}=2^{i-2} \varphi (\gamma k/n)$
            \State $T_u \gets \ConstDecTree(\FS(S(u),t_{i-1}))$ \label{line:construct}
            \State $s_{u} \gets \Determine(T_u)$
        \EndFor
        \State Set $s_{\indID}$ to $s_{\indID_L}s_{\indID_R}$ and broadcast $\langle \indID,s_{\indID},i\rangle$
        
    \EndIf
\EndFor
\State return the determined string for segment $[1,\dots, n]$
\end{algorithmic}
\end{algorithm}

We show correctness by proving the following two key lemmas.

\begin{lemma}\label{lem:high_prob_t_picked}
For every cycle $i \in [0, \log n]$, every segment in cycle $i$ (those of size $\varphi_i$) is picked by at least $t_i$ $\text{honest}$ {peers} w.h.p
\end{lemma}
\begin{proof}
Fix cycle $i$. The number of segments in cycle $i$ is
$\Inum_i$. let $R_i^\Speer$ be the set of honest peers peer $\Speer$ heard from during cycle $i$. Since $R_i^\Speer$ is decided by the adversary before any query is made during cycle $i$,
The expected number of honest peers that pick a given segment at cycle $i$ is 
$E_i=|R_i^\Speer| /\Inum_i= (\goodfrac -\byzfrac)k\cdot\frac{2^i\varphi}{n} \geq  2^i (8(c+2) \ln n)$.
Setting $t_i=E_i/2$, the probability of failure for any one segment, as given by Chernoff bounds (see previous subsection) is no more than $e^{-E_i/8}\leq n^{-c+2}$.

Taking a union bound over the $\sum_{i=0}^{\lg\Inum} \Inum_i < \sum_{i=0}^{\infty} \lceil\frac{n}{\varphi}\rceil \frac{1}{2^i} < n$ segments over all cycles $i$ and $k\leq n$ peers, 
the probability of any failure in any cycle is less than $n^2 \cdot n^{-(c+2)}\leq n^{-c}$ for $c \geq 1$. 
\end{proof}
Denote by $\indID_i^\Speer$ the segment ID picked by peer $\Speer$ at cycle $i$.
\begin{lemma} \label{lem:fast_correct_value}
In every cycle $i\in [0,\lg\Inum]$, every honest peer $\Speer$ learns the correct value of $\invec[\indID_i^\Speer,\varphi_i]$ w.h.p.
\end{lemma}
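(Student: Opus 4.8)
The plan is to prove Lemma~\ref{lem:fast_correct_value} by induction on the cycle index $i$, using Lemma~\ref{lem:high_prob_t_picked} together with the structural property of decision trees that correctness is guaranteed as long as one leaf of the tree carries the correct bit string. Specifically, I want to show the stronger statement that after cycle $i$, every honest peer $\Speer$ holds the correct value of $\invec[\indID_i^\Speer,\varphi_i]$ and has broadcast $\langle \indID_i^\Speer, \invec[\indID_i^\Speer,\varphi_i], i\rangle$ to all peers. Conditioned on the high-probability event of Lemma~\ref{lem:high_prob_t_picked} (that every $i$-segment is picked by at least $t_i$ honest peers, for all $i$ simultaneously), the argument becomes deterministic, so all the ``w.h.p.'' qualifiers can be discharged up front by a single union bound.

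For the base case $i=0$, each honest peer queries its chosen segment $\invec[\indID_0^\Speer,\varphi]$ directly from the source, which returns the correct bit string, and then broadcasts $\langle \indID_0^\Speer, s^\Speer, 0\rangle$ with $s^\Speer = \invec[\indID_0^\Speer,\varphi]$; this is immediate. For the inductive step, fix cycle $i\ge 1$ and an honest peer $\Speer$ with chosen $i$-segment $\indID = \indID_i^\Speer$, composed of the two $(i-1)$-segments $\indID_L = 2\indID-1$ and $\indID_R = 2\indID$. By the inductive hypothesis, every honest peer that picked $\indID_L$ (resp.\ $\indID_R$) in cycle $i-1$ correctly learned and broadcast the true bit string for that segment. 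By the event of Lemma~\ref{lem:high_prob_t_picked} applied to cycle $i-1$, there are at least $t_{i-1}$ such honest peers for $\indID_L$ and at least $t_{i-1}$ for $\indID_R$. Now $\Speer$ waits in cycle $i$ for $\langle\cdot,\cdot,i-1\rangle$ messages from at least $\gamma k$ peers; since at most $\beta k$ peers are Byzantine, at least $\gamma k - \beta k = (\gamma-\beta)k$ of these are honest — but I must be careful here, because the honest peers $\Speer$ hears from are exactly $R_{i-1}^\Speer$, and the $t_{i-1}$ guarantee in Lemma~\ref{lem:high_prob_t_picked} is stated relative to $|R_{i-1}^\Speer|$, so the $t_{i-1}$ honest peers that picked $\indID_L$ are among those $\Speer$ hears from. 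Hence the correct string for $\indID_L$ appears at least $t_{i-1}$ times in the multiset $S(\indID_L)$ collected by $\Speer$, so it survives the $\FS(\cdot, t_{i-1})$ filter and appears as a leaf label in $T_{\indID_L} = \ConstDecTree(\FS(S(\indID_L),t_{i-1}))$. By the property of \Determine\ established in the decision-tree discussion — that as long as one leaf contains the correct bit string, querying the separating indices resolves all conflicts and returns that correct string — $\Speer$ recovers $s_{\indID_L} = \invec[\indID_L,\varphi_{i-1}]$, and symmetrically $s_{\indID_R} = \invec[\indID_R,\varphi_{i-1}]$. Concatenating gives $s_\indID = \invec[\indID,\varphi_i]$, which $\Speer$ then broadcasts as $\langle\indID,s_\indID,i\rangle$, completing the induction.

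The main obstacle, and the step deserving the most care, is the alignment between the ``heard-from'' sets across cycles: Lemma~\ref{lem:high_prob_t_picked}'s guarantee that each segment is picked by $t_i$ honest peers is derived from $|R_i^\Speer|/\Inum_i$ for the particular reading peer, so I need to make sure that the $t_{i-1}$ honest peers who correctly learned $\indID_L$ are precisely peers whose cycle-$(i-1)$ broadcast reaches $\Speer$ by the time $\Speer$ collects its $\gamma k$ messages in cycle $i$. This follows because a peer counts toward $R_{i-1}^\Speer$ exactly when $\Speer$ receives its cycle-$(i-1)$ message, which is what $\Speer$ waits for in line~6 of the inner loop; I should state this explicitly rather than gloss over it. A secondary subtlety is confirming that a Byzantine peer cannot inject a \emph{wrong} string with multiplicity $\ge t_{i-1}$ in a way that matters — but this is harmless, since \Determine\ is robust to extra (incorrect) leaves as long as the correct leaf is present, so no uniqueness of the frequent string is needed. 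Finally, I will note that the per-cycle failure probabilities were already union-bounded in Lemma~\ref{lem:high_prob_t_picked}, so conditioning on that single good event makes the present lemma's conclusion hold w.h.p.\ for all peers and all cycles at once.
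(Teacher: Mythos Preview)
Your proposal is correct and follows essentially the same inductive approach as the paper's proof: base case by direct querying, inductive step via Lemma~\ref{lem:high_prob_t_picked} to guarantee the correct string is $t_{i-1}$-frequent, then the decision-tree property to conclude that \Determine\ returns it. Your treatment is in fact more careful than the paper's on the point you flag as the main obstacle---the paper's statement of Lemma~\ref{lem:high_prob_t_picked} says only that each $i$-segment is picked by $t_i$ honest peers, while what is really needed (and what its proof actually shows) is that $t_i$ honest peers \emph{in $R_i^\Speer$} pick it; you make this alignment explicit, and you also note the harmlessness of Byzantine-injected frequent strings, which the paper leaves implicit.
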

\begin{proof}
By induction on the cycles.
The base case, cycle $u=0$, is trivial since every honest peer that picks a segment queries it completely. 
For the inductive step, cycle $i>0$, let $\indID = \indID_i^\Speer$ be the segment  picked by peer $\Speer$ in cycle $i$. During cycle $i$, $\Speer$ splits $\indID$ into two subsegments $\indID_L, \indID_R$ of size $\varphi_{i-1}$, and construct the $t_{i-1}$ frequent sets, $\FS_u =\FS(S(\indID_u),t_{i-1})$ for $u\in\{L,R\}$.
    By Lemma {\ref{lem:high_prob_t_picked}}, the segments $\indID_L, \indID_R$ were each picked by a least $t_{i-1}$ honest peers w.h.p during cycle $i-1$. By the inductive hypothesis, those peers learned the correct string of segments $\indID_L, \indID_R$, $s_L, s_R$ respectively, and then broadcast $s_L, s_R$ to all other peers. Thus in cycle $i$, peer $\Speer$ see both $s_L$ and $s_R$ at least $t_{i-1}$ times w.h.p, so $s_L \in \FS_L$ and $s_R \in \FS_R$. Hence, the decision trees built for $\FS_L$ and $\FS_R$ will return $s_L$ and $s_R$ respectively, and peer $\Speer$ will learn the correct value for segment $\indID$, $s_{\indID}=s_Ls_R$.
\end{proof}

The correctness of the protocol follows from observing that at the last cycle $\indID=\lg \Inum$, the segments are of size $\varphi_\indID = 2^\indID\varphi = n$. Hence, by Lemma {\ref{lem:fast_correct_value}}, every peer learns the entire input.
\subparagraph*{Query complexity.}
%
Let $\indID$ be an $(i+1)$-segment that is a concatenation of two $i$-segments $\indID_L$ and $\indID_R$. We denote by $num_{i+1}(\indID)$ the total number of  strings received for the subsegments $\indID_L$ and $\indID_R$ in cycle $i$, i.e., $num_{i+1}(\indID)=x_L +x_R$, where $x_u$ for $u \in \{L,R\}$ denotes the number of strings received for the subsegment $\indID_u$ during cycle $i$.
Let $m_x$ be the number of segments $\indID$ in cycle $i+1$ for which $num_{i+1}(\indID) = x$.
Formally, 
\begin{equation*}
    m_x = |\{\indID \mid \text{received exactly $x$ messages of the form }\langle\indID_L, s, i\rangle \text{ or } \langle\indID_R,s,i\rangle \}|
\end{equation*}
The probability of picking a segment $\indID$ in cycle $i+1$ with $num_{i+1}(\indID) = x$ strings is $m_x$ divided by the total number of segments, or 
$m_x/\Inum_{i+1}$. The expected cost of querying in Step $i+1$ is therefore
\begin{equation*}
\sum_x\frac{m_x}{\Inum_{i+1}} \cdot \frac{x}{t_i}
~=~ \sum_x \frac{m_x}{\Inum_{i+1}} \cdot \frac{2x}{(\gamma-\byzfrac) k/\Inum_i}
~=~ \frac{2\Inum_i}{(\gamma-\byzfrac) k \cdot \Inum_{i+1}} \sum_x m_x\cdot x
~\leq~ \frac{4}{\gamma-\byzfrac}~,
\end{equation*}
where the last inequality follows since each peer broadcasts at most one string, so $\sum_x m_x \cdot x\leq k$.

Step 0 requires $\varphi=O(n\log n/((\gamma-\byzfrac) k))$ queries.
The expected cost of querying is $O(1/(\gamma-\byzfrac))$ per step $i>0$ per peer, and there are fewer than $\log n$ steps, so the total expected query cost is at most $O(n\log n/((\gamma-\byzfrac) k))$. 

\begin{theorem}\label{thm:fast_expected_optimal}
There is a $O\left(\log \left(\frac{\goodfrac k}{\ln n}\right)\right)$-cycle protocol which w.h.p. computes \download\ in the point-to-point model with expected query complexity $\Query=O(n \log n/(\gamma - \byzfrac k))$ and message size $O(n)$. 
\end{theorem}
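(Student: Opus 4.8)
The plan is to assemble the theorem from the three ingredients already developed for Protocol~\ref{alg:time-query}: the correctness argument (Lemmas~\ref{lem:high_prob_t_picked} and~\ref{lem:fast_correct_value}), the expected query bound computed just above the statement, and a cycle-count argument. I would first observe that the protocol terminates after the loop index reaches $\lg \Inum = \lg\lceil n/\varphi\rceil$ cycles. In the regime where the protocol is nontrivial (i.e., not the degenerate case $k \le 32(c+1)\ln n/\gamma$ where every bit is queried directly), the segment size is $\varphi = \Theta\!\left(\frac{n\log n}{(\gamma-\beta)k}\right)$ in Case~2 and $\varphi = \Theta\!\left(\sqrt{n/(\gamma-\beta)}\right)$ in Case~1; in either case $\Inum = \lceil n/\varphi\rceil = O\!\left(\frac{(\gamma-\beta)k}{\log n}\right)$, so the number of cycles is $\lg \Inum = O\!\left(\log\frac{\gamma k}{\ln n}\right)$, matching the claimed cycle bound (absorbing the $\gamma-\beta$ versus $\gamma$ discrepancy into the asymptotics for constant fault fraction, or noting $\gamma-\beta \le \gamma$).

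Next I would cite Lemma~\ref{lem:fast_correct_value} applied at the last cycle $i = \lg\Inum$, where the segments have size $\varphi_i = 2^i \varphi = n$: every honest peer picks the unique full-input segment, learns it correctly w.h.p., and outputs it, establishing correctness with high probability. The query complexity bound is exactly what was derived in the ``Query complexity'' paragraph: Step~$0$ costs $\varphi = O\!\left(\frac{n\log n}{(\gamma-\beta)k}\right)$ queries, each subsequent step costs $O\!\left(\frac{1}{\gamma-\beta}\right)$ in expectation (from the calculation $\sum_x \frac{m_x}{\Inum_{i+1}}\cdot\frac{x}{t_i} \le \frac{4}{\gamma-\beta}$, using $\sum_x m_x x \le k$ since each peer broadcasts one string per cycle), and with fewer than $\log n$ steps the sum is dominated by the Step~$0$ term, giving $\Query = O\!\left(\frac{n\log n}{(\gamma-\beta)k}\right)$. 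I would also note that the asynchronous wait instruction ``wait for $\langle \indID,s,i-1\rangle$ messages from at least $\gamma k$ peers'' is legitimate — it cannot deadlock — because at least $\gamma k$ peers are honest and each honest peer eventually broadcasts in every cycle; this is the analogue of the quiescence discussion from the Byzantine lower bound section, and it is what makes the synchronous-to-asynchronous adaptation go through.

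The main obstacle, and the point deserving the most care, is verifying that the thresholds $t_{i-1} = 2^{i-2}\varphi(\gamma k/n)$ are simultaneously (a) small enough that the correct string is $t_{i-1}$-frequent w.h.p.\ even though each honest peer only hears from a $(\gamma-\beta)k$-sized subset chosen adversarially before the cycle's random picks — this is Lemma~\ref{lem:high_prob_t_picked}, which needs the Chernoff bound to apply, i.e.\ $t_i \ge 8(c+2)\ln n$, which in turn forces the base segment size $\varphi$ to be $\Omega\!\left(\frac{n\log n}{(\gamma-\beta)k}\right)$ — and (b) large enough that the decision-tree sizes $|\FS_\indID| \le x_\indID/t_{i-1}$ sum to something small, which is what yields the $O(1/(\gamma-\beta))$ per-cycle expected cost. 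The delicate coupling is that these two requirements pin down $\varphi$ and hence both the cycle count and the Step~$0$ query cost; I would make sure the union bound in Lemma~\ref{lem:high_prob_t_picked} (over $< n$ segments across all cycles and $\le n$ peers, giving failure probability $< n^{-c}$) is stated carefully, since the adversary gets to fix each peer's ``heard-from'' set anew in each cycle, and independence of the segment choices from that set is exactly what licenses treating $k^\Speer_\indID$ as a clean binomial. Everything else — plugging $t = (\gamma-\beta)k/(2\Inum)$ into $\Query \le k/t + \varphi$, and checking the two cases for $\varphi$ — is routine arithmetic already carried out in the preceding paragraphs.
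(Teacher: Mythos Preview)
Your proposal is correct and follows the paper's own approach: the theorem is simply a summary of the analysis already carried out for Protocol~\ref{alg:time-query}, namely Lemmas~\ref{lem:high_prob_t_picked} and~\ref{lem:fast_correct_value} for correctness and the ``Query complexity'' paragraph for the expected cost. Your derivation of the cycle count $\lg\Inum = O(\log(\gamma k/\ln n))$ from the choice of $\varphi$ and your remark on deadlock-freedom of the wait instruction are useful elaborations that the paper leaves implicit.
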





\section{Application: Efficient Blockchain Oracles}
\label{sec:application}
Blockchain systems \cite{nakamoto2009bitcoin} have seen a rise in popularity due to their ability to provide both transparency and strong cryptographic guarantees of agreement on the order of transactions, without the need for trusted third-party entities. More general computational abilities have also been well sought out for blockchains. Smart contracts \cite{Szabo1997FormalizingAS} fulfill that need by providing users of the blockchain a way to run programs \emph{on} the blockchain that ensures reliable and deterministic execution while providing transparency and immutability of both the code of the program and its state(s). Note that since the execution is required to be deterministic, i.e., every node must produce the same result, smart contracts are restricted to accessing \emph{on-chain} data (that has been agreed upon), as \emph{off-chain} data may introduce non-determinism to the execution. 

\emph{Blockchain oracles} \cite{Astraea:Oracle, chainlink, supra2024blockchain} are components of blockchain systems that provide multiple services that support and extend the functionality of smart contracts (and other on-chain entities). The most important and fundamental service a blockchain oracle provides is bridging between the on-chain network and off-chain resources \cite{ocr, DORA}, providing smart contracts access to external data without introducing non-determinism into their execution. We focus on this service and artificially consider it to be the sole responsibility of a blockchain oracle. 
%
In the remainder of this section, we explain in detail a possible application of the DR model and the \download\ problem for improved query efficiency within the context of \emph{blockchain oracles}.

\subparagraph*{Blockchain oracles general structure.} 
Blockchain oracles consist of an on-chain component and an off-chain component. The off-chain component encompasses the different data sources that store the required external information (e.g., stock prices, weather predictions) and the network of nodes in charge of retrieving that information and transmitting it to the on-chain component.
The on-chain component can be thought to be (but is not necessarily) a smart contract that is responsible for verifying the validity of the report, making the information public on the blockchain it is hosted on, and using it for its execution.

Formally, the off-chain component consists of two parts: an asynchronous\footnote{The network is sometimes assumed to be \emph{partially synchronous} in blockchain oracles.}
\emph{oracle network}, with peers (nodes) $v_i$, $i\in [1,k]$, 
capable of exchanging direct messages among themselves,
and \emph{data sources} $DS_j$, $j\in [1,m]$, each storing an array $\invec_j$ of $n$ variables in which the on-chain component is interested.
Each peer can read the $i$-th cell from the $j$-th data source $DS_j$ by invoking $\DataQuery(i,j)$. 
A fraction of up to $\byzfrac_t \leq 1/3$ of the 
peers may be Byzantine, and a fraction of up to $\byzfrac_d\leq 1/2$ of the $m$ data sources may be Byzantine.
Denote the on-chain component by $SC$. 

The goal of blockchain oracles, as mentioned above, is to pull information from external sources and push a final value on-chain. There are a few difficulties that may arise when trying to develop such a system. First, 
it might be the case that different data sources report slightly different values (e.g., prices of a specific stock), even if all of them act honestly. Moreover, corrupted data sources might provide false and even inconsistent values (providing some nodes with value $\alpha$ and another with $\alpha'$). The system needs to pick a final value in a way that (1) represents the range of honest values pulled from honest data sources and (2) malicious players (both data sources and peers) cannot force the system to pick a final value that does not represent the range of honest values.

\subparagraph*{The Oracle Data Delivery (ODD) problem.} 
Denote by $\cH_{ds}$ the set of honest data sources. Let 
$h_{\min}(i) =\min_{j\in \cH}\{\invec_j[i]\} $ and 
$h_{\max}(i) = \max_{j\in \cH}\{\invec_j[i]\}$. 
The \emph{honest range} of $i$ is the range $\sigma(i)=[h_{\min}(i), h_{\max}(i)]$.
The \emph{ODD problem}
requires the on-chain $SC$ to publish an array of values $\res$ to the target blockchain such that
$\res[i] \in \sigma(i)$, for every $i\in [n]$.

A blockchain oracle protocol can generally be split into three distinct steps: (1) collecting data, (2) reaching an agreement on the collected data, and (3) deriving and publishing a final value.
Note that this abstraction is the minimum required abstraction to capture the operation of blockchain oracle protocols such as OCR \cite{ocr} and DORA \cite{DORA}\footnote{These protocols have many additional technical aspects, different structures, and different ways of handling steps (2) and (3). As our focus is on improving step (1), we may w.l.o.g. assume the abstract structure.}.

We now show how our $\download$ protocols can be used to significantly reduce the cost of the \emph{Oracle Data Collection (ODC)}, i.e, step (1) of blockchain oracles. 

\subparagraph*{Improving ODC by blockchain oracles via $\download$}

Current protocols perform the 
data collection
step by the following ODC process: 

\noindent
For every node:
\begin{itemize}
\item Pick $2m\byzfrac_d +1$ data sources into a set $ADS$.
\item Perform $o_{i,j}\gets \DataQuery(i,j)$, for every $i\in[1,n]$ and $j\in ADS$.
\item Calculate the median $o_i\gets median(\{o_{i,j} \mid j\in ADS\})$ and proceed to step (2).
\end{itemize}

The results of \cite{ocr,DORA}, cast in our abstract formulation, yield the following result.

\begin{theorem}
{\bf \cite{ocr,DORA}}
The ODC process guarantees that $o_i \in \sigma(i)$ for every $i\in [1,n]$ and has total query cost $O(m n k)$
and worst case individual query cost $\Query=O(mn)$.
\end{theorem}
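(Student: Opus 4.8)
The plan is to verify the correctness claim ($o_i\in\sigma(i)$ for every $i$) and the two cost bounds separately; all three follow directly from the structure of the ODC process, so this is essentially a restatement in the paper's notation of the standard analysis of median aggregation over $2t+1$ sources tolerating $t$ faults.

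For correctness, fix an index $i\in[1,n]$ and an honest node. The set $ADS$ it picks has $|ADS|=2m\byzfrac_d+1$ data sources, and since at most a $\byzfrac_d$ fraction of all $m$ sources are Byzantine, at most $m\byzfrac_d$ of the sources in $ADS$ are Byzantine, regardless of how $ADS$ was chosen; hence at least $m\byzfrac_d+1$ of them are honest. Every honest source $DS_j\in ADS$ returns the true value $\invec_j[i]$, which by definition lies in $[h_{\min}(i),h_{\max}(i)]=\sigma(i)$. Sort the $2m\byzfrac_d+1$ returned values; the median $o_i$ has rank $m\byzfrac_d+1$. If $o_i<h_{\min}(i)$, then the $m\byzfrac_d+1$ values of rank at most $m\byzfrac_d+1$ are all strictly less than $h_{\min}(i)$, so none of them was returned by an honest source, forcing at least $m\byzfrac_d+1$ Byzantine sources in $ADS$ — a contradiction. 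A symmetric argument rules out $o_i>h_{\max}(i)$. Therefore $o_i\in\sigma(i)$, and since the index and the node were arbitrary, the guarantee holds for all $i\in[1,n]$ and all honest nodes.

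For the cost bounds, note that a node issues exactly $|ADS|\cdot n=(2m\byzfrac_d+1)\,n$ queries, and no further queries are performed in step (2) since the median is computed locally. As $\byzfrac_d\le 1/2$, we have $2m\byzfrac_d+1\le m+1=O(m)$, so the worst-case individual query cost is $\Query=O(mn)$; summing over the $k$ nodes gives a total query cost of $O(mnk)$.

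I do not expect any real obstacle here: the only point requiring a modicum of care is the median-sandwiching step, and even there the subtlety is merely to use strict inequalities against $h_{\min}(i)$ and $h_{\max}(i)$ so that ties at the endpoints of the honest range cause no trouble; everything else is bookkeeping.
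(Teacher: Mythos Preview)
Your proof is correct. The paper does not actually supply a proof of this theorem: it is stated as a result of \cite{ocr,DORA} (recast in the paper's notation) and no argument is given. Your write-up provides exactly the standard justification one would expect --- the median-of-$(2m\byzfrac_d+1)$ argument for correctness and the direct query count for the cost bounds --- so there is nothing to compare against, and your argument stands on its own.
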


Instead, we propose utilizing the guarantees of $\download$ protocols, namely, that for an honest data source $DS_j$,
the output of each peer is exactly $\invec_j$, to construct the following modification of the ODC steps.

\begin{itemize}
\item For every node, pick $2m\byzfrac_d +1$ data sources into a set $ADS$.
\item For every data source $j\in ADS$, run a $\download$ protocol (denote the result for cell $i$ from data source $j$ by $o_{i,j}$).
\item Calculate the median $o_i\gets median(\{o_{i,j} \mid j\in ADS\})$ and proceed to step (2).
\end{itemize}

It is easy to verify that this modified construction yields the following.

\begin{theorem}
The \download-based ODC process guarantees $o_i \in \sigma(i)$ for every $i\in[1,n]$ and takes $\tilde{O}(m n)$ 
total queries and $\Query=\tilde{O}(mn/k)$ w.h.p.
\end{theorem}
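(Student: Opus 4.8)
The plan is to dispatch correctness and complexity separately; both reduce quickly to facts already established for \download, so the proof should be short, with essentially one genuine subtlety.

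For correctness, fix an honest peer $\Speer$ and an index $i\in[1,n]$. The set $ADS$ contains $2\byzfrac_d m+1$ data sources, while the adversary corrupts at most $\byzfrac_d m$ of the $m$ sources; hence at least $\byzfrac_d m+1$ of the sources in $ADS$ are honest, so honest sources form a strict majority of $ADS$. For each honest $j\in ADS$, I would invoke the \download\ guarantee of Theorem~\ref{thm:fast_expected_optimal}, which applies because the peer-corruption fraction satisfies $\byzfrac_t\le 1/3<1/2$: peer $\Speer$'s output of the \download\ run for $DS_j$ equals $\invec_j$, so $o_{i,j}=\invec_j[i]\in[h_{\min}(i),h_{\max}(i)]=\sigma(i)$. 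Consequently, among the $2\byzfrac_d m+1$ values in the multiset $\{o_{i,j}\mid j\in ADS\}$, at least $\byzfrac_d m+1$ lie in $\sigma(i)$, at most $\byzfrac_d m$ are strictly below $h_{\min}(i)$, and at most $\byzfrac_d m$ are strictly above $h_{\max}(i)$; therefore its $(\byzfrac_d m+1)$-th order statistic, namely $o_i$, lies in $\sigma(i)$. A union bound over $i\in[1,n]$ and over the honest peers, using the w.h.p. correctness of each \download\ instance, gives the guarantee for the whole array.

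For complexity, note $|ADS|=2\byzfrac_d m+1=O(m)$, and since there are only $m$ sources overall, each honest peer participates in at most $O(m)$ \download\ executions. By Theorem~\ref{thm:fast_expected_optimal}, each such execution costs $\tilde O(n/((\goodfrac-\byzfrac_t)k))=\tilde O(n/k)$ queries per peer — here $\goodfrac-\byzfrac_t=1-2\byzfrac_t\ge 1/3$ is a constant — hence $\tilde O(n)$ queries in total over the $k$ peers. Summing over the $O(m)$ executions yields $\Query=\tilde O(mn/k)$ per peer and $\tilde O(mn)$ in total, and since there are only $O(m)$ downloads, assuming $m\le\mathrm{poly}(n)$ a union bound preserves the w.h.p. guarantee.

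The step I expect to be the real obstacle is that \download\ was analyzed assuming a \emph{trusted} source, whereas a Byzantine $DS_j$ may answer queries arbitrarily, and even inconsistently across peers. Two points then need care. First, the \download\ execution for such a source must still \emph{terminate} at every honest peer: this holds because the randomized protocol of Section~\ref{sec:async_byz_faults} runs a fixed $O(\log(\goodfrac k/\ln n))$ number of cycles and never waits for more than $\goodfrac k$ incoming messages per cycle, and the $\goodfrac k$ honest peers' messages are guaranteed to arrive irrespective of the source's behavior. Second, the value $o_{i,j}$ a peer ends up with for a Byzantine source may be meaningless and may differ from peer to peer; this is harmless because it contributes only one of the $\le\byzfrac_d m$ "bad" entries that the median already tolerates, and because the target property $o_i\in\sigma(i)$ is a per-peer statement requiring no agreement among peers. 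A minor secondary matter is reconciling the \emph{expected} query bound of Theorem~\ref{thm:fast_expected_optimal} with the w.h.p. form of the present statement, which follows from standard truncation and union-bound arguments over the $O(m)$ instances.
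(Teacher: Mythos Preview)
Your proposal is correct and follows the natural route the paper intends: the paper itself offers no proof beyond ``It is easy to verify,'' and your argument (majority of $ADS$ is honest $\Rightarrow$ \download\ returns the true $\invec_j$ for those sources $\Rightarrow$ the median falls in $\sigma(i)$; complexity is $O(m)$ instances times the per-instance \download\ cost) is exactly the verification it has in mind. You in fact go further than the paper by flagging and handling the Byzantine-source subtlety (termination and garbage outputs) and the expected-vs.-w.h.p.\ discrepancy, neither of which the paper addresses.
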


Note that the $\download$ protocol presented in this paper assumes a binary input array, but this can be extended to numbers via a relatively simple extension. 
However, it is important to note
that our solution relies on the following restrictive assumption.
For two honest peers $v, v'$, if both $v$ and $v'$ issue the query $\DataQuery(i,j)$, then they get the same result, for every $i\in [1,n]$ and honest data source $j$ (i.e., the data does not change if queried at different times).
Getting rid of this assumption and solving the problem efficiently for dynamic data is left as an open problem for future study.

\bibliographystyle{plain}
\bibliography{references}
\begin{appendix}
    \section{Deferred Deterministic Lower Bound Proof}
    \label{apdx:lower_bound}
    Here, we provide the proof for Theorem \ref{thm:async_byz_det_lower_bound}.
    \begin{proof}
Let $P$ be a deterministic protocol that solves \download\ in the asynchronous setting with $\byzfrac\geq 1/2$ fraction of Byzantine faults and $\Query\leq n-1$.
Let $f=\beta k$ be the upper bound on the number of Byzantine faults.

We say that $P$ is \emph{$f$-vulnerable} if there exists a \emph{synchronous} execution $\EXEC$ of $P$, a set $F$ of at most $f$ peers that crash in $\EXEC$, and a peer $\Speer$ that is \nonfaulty in $\EXEC$, such that $\Speer$ runs forever in $\EXEC$ (i.e., $\EXEC$ is an infinite execution). The \download\ problem requires each \nonfaulty peer to terminate within a finite time. Therefore, we assume hereafter that $P$ is not $f$-vulnerable. In other words, in every synchronous execution $\EXEC$ of $P$ with up to $f$ crashes, all \nonfaulty peers terminate within finite time.

For a set $F$ of at most $f$ \peers, let $\EXEC_s^F$ be the synchronous execution with input $\invec = [0, \ldots, 0]$, where the \peers in $F$ crash, let $t^F_\Speer$ be the (finite) time in which $\Speer \notin F$ terminates in $\EXEC^F_s$ and let $i^F_\Speer$ be a bit that is not queried by $\Speer$ in $\EXEC^F_s$ (which exists because $\Query<n$). 

Let $R$ $=\{\Speer_{k-f+1},\ldots,\Speer_k\}$ 
and let $\invec'$ be a bit array such that, in the execution where the peers of $R$ are Byzantine (i.e., $F=R$), $\invec'[j] = 0$, for every $j\neq i^R_{\Speer_{1}}$, and $\invec'[i^R_{\Speer_1}] = 1$ (i.e, $\invec$ and $\invec'$ differ only in  $i^R_{\Speer_1}$).

Consider the execution $\EXEC$ where the input is $\invec'$, and the adversary applies the following strategy: 
\begin{enumerate}
\item
It slows down the outgoing messages of \peers 
in $R$ by $t^R_{\Speer_1}+1$ 
time units (sufficient so that 
$\Speer_1$ terminates before any message sent by these peers is received by $\Speer_1$, which exists 
since the protocol is not $f$-vulnerable). 
\item
It corrupts the peers 
$F=\{\Speer_2,\ldots,\Speer_{k-f}\}$, and makes them behave as if the input is $\invec$. Note that this is possible since $k-f < f$.
\end{enumerate}
By definition of $\EXEC$, up until time $t^R_{\Speer_1}$,
the \peer $\Speer_1$ (which is \nonfaulty in both executions):
\begin{enumerate}
\item 
will not receive messages from any \peer $\Speer \in$ $R$ in $\EXEC$, as in $\EXEC_s^R$,
\item 
will receive the same messages from the \peers of $F$ in $\EXEC$ as it receives in $\EXEC_s^R$,
\item 
will receive the same responses from the \Source in $\EXEC$ for every query it makes (to bits other than $i^R_{\Speer_1}$) as in $\EXEC_s^R$.
\end{enumerate}
Hence, in both executions $\EXEC_s^R$ and $\EXEC$, $\Speer_1$ will terminate at time $t^R_{\Speer_1}$ without querying bit $i^R_{\Speer_1}$. Therefore, the executions $\EXEC$ and $\EXEC^R_s$ are indistinguishable from the point of view of $\Speer_1$. Hence, it will output the same output in both executions, in contradiction to the correctness requirement.
\end{proof}
\end{appendix}
 
\end{document}